\title{The multiplayer Colonel Blotto game}
\author{Enric Boix-Adser\`a\thanks{Email: \texttt{eboix@mit.edu}. Supported in part by an NSF GRFP fellowship and a Siebel Scholarship.}\\ MIT
	\and Benjamin L. Edelman\thanks{Email: \texttt{bedelman@g.harvard.edu}. Supported in part by NSF Grant CCF-15-09178.} \\ Harvard University
	\and Siddhartha Jayanti\thanks{Email: \texttt{jayanti@mit.edu}. Supported by an NDSEG Fellowship from the United States Department of Defense.\vspace{0.5em}} \\ \includegraphics[height=1.5em]{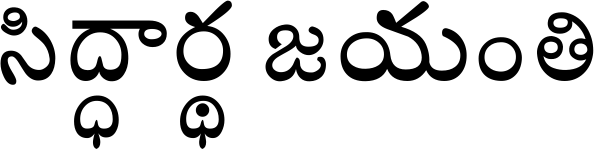} \\ MIT CSAIL}
\date{February 2020}
\definecolor{lightgray}{gray}{0.9}
\newcolumntype{Y}{>{\centering\arraybackslash}X}
\newcommand{\E}[0]{\mathbb{E}} 
\newcommand{\R}[0]{\mathbb{R}} 
\newcommand{\one}[0]{\mathbbm{1}}
\newcommand{\B}[0]{\mathcal{B}} 
\newcommand{\cD}[0]{\mathcal{D}}
\DeclareMathOperator{\poly}{poly}
\DeclareMathOperator{\trace}{trace}
\DeclareMathOperator{\Ber}{Ber}
\let\baraccent=\= 
\renewcommand{\=}[1]{\stackrel{#1}{=}} 
\providecommand{\RR}{\mathbb{R}}
\providecommand{\NN}{\mathbb{N}}
\providecommand{\ZZ}{\mathbb{Z}}
\providecommand{\cB}{\mathcal{B}}
\providecommand{\B}{\mathcal{B}}
\providecommand{\PP}{\mathbb{P}}
\providecommand{\EE}{\mathbb{E}}
\providecommand{\eps}{\epsilon}
\providecommand{\N}{\mathbb{N}}
\providecommand{\NP}{\mathsf{NP}}
\mathchardef\mhyphen="2D 
\newcommand{\lone}[1]{\left\| {#1} \right\|_{1}}
\providecommand{\sm}{\setminus}
\newcommand{\interior}[1]{%
  {\kern0pt#1}^{\mathrm{o}}%
}
\newtheorem{theorem}{Theorem}[section]
\newtheorem{lemma}[theorem]{Lemma}
\newtheorem{definition}[theorem]{Definition}
\newtheorem{remark}[theorem]{Remark}
\newtheorem{claim}[theorem]{Claim}
\newtheorem{proposition}[theorem]{Proposition}
\newtheorem{corollary}[theorem]{Corollary}
\let\oldnl\nl
\newcommand{\nonl}{\renewcommand{\nl}{\let\nl\oldnl}}
\title{The Multiplayer Colonel Blotto Game}
\begin{document}

\maketitle

\begin{abstract}
 We initiate the study of the natural multiplayer generalization of the classic continuous \emph{Colonel Blotto game}. 
 The two-player Blotto game, introduced by Borel \cite{borel} as a model of resource competition across $n$ simultaneous fronts, has been studied extensively for a century and has seen numerous applications throughout the social sciences. 
 Our work defines the \emph{multiplayer Colonel Blotto game} and derives Nash equilibria for various settings of $k$ (number of players) and $n$. 
 We also introduce a ``Boolean'' version of Blotto that becomes interesting in the multiplayer setting. The main technical difficulty of our work, as in the two-player theoretical literature, is the challenge of coupling various marginal distributions into a joint distribution satisfying a strict sum constraint. In contrast to previous works in the continuous setting, we derive our couplings algorithmically in the form of efficient sampling algorithms.


 
\end{abstract}

\section{Introduction}

The \emph{Colonel Blotto game} has been featured in the game theory literature ever since it was introduced by Borel in 1921 \cite{borel}. It has found numerous applications in the social sciences as a model of competition with limited resources across simultaneous winner-take-all fronts. 

The basic structure of the game is as follows. 
There are two players, Alice and Bob, competing over $n$ \emph{battlefields} of value $v_1,\ldots,v_n$ (which may represent items, voting districts, advertising slots, etc.). 
Alice and Bob each have finite {\em budgets}---$\B_{Alice}, \B_{Bob}$---of a resource to distribute across the battlefields. 
They must simultaneously decide how to allot their budgets of the resource across the battlefields by placing a vector of $n$ bids, one for each battlefield. 
The value of each battlefield is won by the player that allocates more resources to it, or split evenly in the case of a tie. 
The players have the goal of maximizing the total value of their winnings.
It is common to restrict the game to be {\em symmetric}---players have the same budget---and/or {\em homogeneous}---all battlefields have the same value.

Though the game is simple to describe, there is considerable complexity in the equilibrium strategies that emerge.\footnote{It is well known that even the simplest Blotto games do not admit pure Nash equilibria.
Consider the two-player symmetric homogeneous Blotto game with $n > 2$ battlefields.
If Alice fixes any bid vector $\vec{a} = (a_1,\ldots,a_n)$ (where $a_1 \ne 0$ without loss of generality), then Bob can maximize his winnings by picking the action $\vec{b} = (0,a_2 + \eps, a_3 + \eps,\ldots,a_n + \eps)$, where $\eps = \frac{a_1}{n-1}$, to win all but the first battlefield.
This pair of actions is not in equilibrium because Alice can switch her strategy to that of Bob in order to win half of the total value rather than $1/n$ of it.
Therefore, in general we are looking for mixed Nash equilibria.}
Analysis of two-player Blotto has proved to be a challenging mathematical task, because randomized strategies for the game are complicated joint distributions over $n$-dimensional vectors on a simplex. 
However, there has been substantial recent progress in finding and classifying equilibria for several standard versions of the game \cite{laslier,weinstein,roberson,hortala,schwartz,thomas}, many of which are now essentially solved \cite{kovenock}.
In most cases, equilibria for the Blotto game have been developed based on solutions to the much simpler soft-budget constraint version of the game called {\em General Lotto}.
In a strategy for the Lotto game, each player bids a {\em distribution} for each battlefield, rather than a single value, and the winner of the battlefield is computed by comparing single samples from the distributions played by the two players.
What makes Lotto easier to analyze is its budget constraint, that the sum of the $n$ sampled bids of each player $i$ is at most $\B_i$ in {\em expectation}.
In contrast, the Blotto game requires a way to couple the $n$ different bid distributions such that any joint sample satisfies the budget constraint $\B_i$ {\em with probability 1}.

Modeling two-party elections is a famous application of the Blotto game \cite{laslier, roberson, merolla}.
Hoping to understand multiparty electoral systems, Myerson alluded to a Blotto game with more than two players in \cite{myerson}, which compares different types of multiparty election systems by studying the equilibrium strategies those systems induce.
In this context, the classic plurality vote elections conducted in many parliamentary democracies such as India and the United Kingdom are naturally modelled by a multiplayer generalization of Colonel Blotto with, e.g., the multiple parties corresponding to players, voting districts corresponding to battlefields, and district advertising expenditures corresponding to the resource allocations.
However, stating that ``the hardest part of [the Blotto] problem was to construct joint distributions for allocations that always sum to the given total,'' Myerson weakened the true budget constraint to the soft one and only analyzed what would nowadays be called multiplayer homogeneous symmetric General Lotto.
While Lotto is a good approximation to Blotto in the regime of large $n$ (by law of large numbers), 
it is a rather poor approximation in the regime of small $n$.
Nevertheless, analyzing multiplayer Blotto has remained an open problem for nearly 30 years.

\subsection{Our Contributions}

We formally define the \emph{multiplayer Colonel Blotto game}, derive equilibria in several settings of the game, and provide linear time algorithms to sample from these equilibrium mixed strategies.
In multiplayer Blotto, there are $k \ge 2$ players with budgets $\B_1,\ldots,\B_k$, and, again, each battlefield is won by whichever player places the highest bid on it.
The game serves as a natural model for several of the famous applications studied in the two-player case, including the electoral competition application suggested by Myerson.

We focus on the symmetric case of multiplayer Blotto, where all players have the same budget, and construct efficiently-sampleable symmetric Nash equilibria for various settings of number of battlefields $n$ and number of players $k$:

\begin{enumerate}
\item 
We give equilibria for any number of players $k$ whenever the battlefields can be partitioned into $k$ sets of equal value (Theorem~\ref{thm:nequalsmk}). 
Furthermore, we provide an $O(n)$ time algorithm for sampling the randomized strategy (Algorithm~\ref{alg:nequalskcolonelBlottonashequilibrium}).\footnote{Throughout the paper, we use standard big-O notation $g(n) = O(f(n))$ to indicate that $\limsup_{n \to\infty} g(n) / f(n) \leq C$ for some constant $C$. }

\item 
We give equilibria for symmetric three-player Blotto whenever no battlefield accounts for more than one third of the value of all battlefields (Theorem~\ref{thm:kequalsthree}). 
We again provide an $O(n)$ algorithm to compute all these equilibrium strategies (Algorithm~\ref{alg:kequalsthree}).

This result is the highlight of our work.
The proof takes advantage of a connection between the Dirichlet distribution and the uniform distribution on the 2-sphere $\mathbb{S}_2$, as well as the rotational invariance of the uniform distribution on $\mathbb{S}_2$. The core technical challenge is constructing a suitably-structured map from $\mathbb{S}_2$ to $\R^n$, essentially converting a 3-battlefield equilibrium into an $n$-battlefield equilibrium.
\end{enumerate}

We also introduce a simple variant of Blotto which we call the \emph{Boolean Colonel Blotto game}. Boolean Blotto is the same as normal Blotto except players have integer budgets and their bids on each battlefield are restricted to be 0 or 1. 
In other words, players choose which subset of battlefields to {\em compete} on (i.e., bid 1 on).  The value of each battlefield is, as in Blotto, split evenly among the players who bid the most on it. 
In Section~\ref{sec:BooleanBlotto} we formally define and analyze this game in the multiplayer setting, which turns out to be significantly more interesting than the two-player Boolean setting. 
We give equilibria for all values of $k$ for Boolean Blotto regardless of battlefield valuations (Theorem~\ref{thm:BooleanBlotto}).
Interestingly, some of the quantities that arise in the equilibrium computation seem to be {\em hard} to compute, in the technical sense that it is not known how to compute them in polynomial time in the standard computing model.
Consequently, we are unable to give an {\em efficient} (polynomial-time) algorithm for players to sample from the exact Nash Equilibrium.
However, we derive a fully-polynomial-time approximation scheme for sampling the strategies (Algorithm~\ref{alg:BooleanvaluedcolonelBlotto}), i.e, an algorithm that efficiently samples a strategy from an $\eps$-approximate Nash Equilibrium for any given $\eps > 0$, however small it may be.
In particular, our algorithm runs in time polynomial in $n$, $k$, and $\log(1/\eps)$.

\subsection{Motivation}

In the century after its introduction by Borel, the Blotto game has seen a plethora of applications. Many of these naturally generalize to the multiplayer setting. Some are even more natural to consider with many players. Here are just a few examples:

\begin{description}
    \item  [Elections:]
    $k$ candidates or parties compete across $n$ winner-take-all districts \cite{myerson,laslierpicard,laslier,merolla}. $k=2$ corresponds to a two-party system, while $k\geq 3$ corresponds to a multi-party system. Each candidate or party must decide how to allocate campaign funds, or candidate time, across districts. One could also consider individual voters in a single-district election to be battlefields, as Myerson did in \cite{myerson}.
    
    \item [R\&D:] 
    $k$ companies have the ability to use their fixed R\&D budgets to research and develop $n$ potential drugs \cite{golmanpage, kvasov}. If the first company to develop the drug will receive the patent and all the profits for that drug, then this is a Blotto game.
    
    \item [Local Monopolies:]
    $k$ competing companies in the same industry want to become the dominant player in each of $n$ new local markets. If each market will tend to be dominated by the company that allocates the most resources to the market (due to network effects, for example) then this is a Blotto game.
    
    \item [Advertising:]
    $k$ companies compete to advertise a substitute good to $n$ consumers \cite{friedman}. Each consumer will probably only purchase one of the substitutes, so each battlefield (consumer) is indeed winner-take-all.
    
    \item [Ecology:]
    $k$ species in a habitat compete to fill $n$ distinct ecological niches \cite{golmanpage}. In this setting, if each niche can only be filled by one species, we can potentially think of the species as evolving Blotto strategies through natural selection.
\end{description}

There are also substantial mathematical connections between Blotto and simultaneous all-pay auctions \cite{roberson, robersonkvasov}. It is natural to consider these in the multiplayer setting.


Boolean Blotto, on the other hand, is a good model for any Blotto-type situation where whether to compete in a battlefield is a binary decision. For example, consider an election---perhaps a local election, or party primary---in which there are $n$ issues and the $k$ candidates distinguish themselves by choosing some subset of issues to focus on. Or consider $k$ companies each marketing substitute products (e.g. medications) by highlighting certain features. Finally, one could consider any setting in which $k$ people must each decide which of $n$ games of chance to compete in (at no cost). Beyond its immediate applications, we introduce Boolean Blotto because it is a simple variation of the standard Blotto game that requires completely different mathematical techniques to analyze.

\subsection{Proof overview}
\paragraph{Proof overview}
In order to derive the mixed Nash equilibria of Theorems~\ref{thm:nequalsmk},~\ref{thm:kequalsthree} and~\ref{thm:BooleanBlotto} for the multiplayer Blotto games, we  construct equilibria for the General Lotto version of the game, and then show a coupling of each player's bid distributions into a joint distribution that satisfies the budget constraint. 
Solving for Lotto equilibria is easier, because it allows us to think of a player's bid distributions on different battlefields as independent marginal distributions, rather than as a $n$-dimensional joint distribution.
General Lotto was solved by Myerson \cite{myerson} in the symmetric homogeneous multiplayer setting.
We extend his techniques in Sections~\ref{sec:multiplayerBlotto} and \ref{sec:BooleanBlotto} in order to derive the unique symmetric equilibria for the symmetric heterogeneous multiplayer setting and the symmetric heterogeneous Boolean-valued multiplayer setting.

Following an approach similar to \cite{hart}, we use our solutions to General Lotto to derive Lemmas~\ref{lem:colonelBlottosuffconds} and~\ref{lem:suffcondsboolvalued}, which are sufficient conditions for Colonel Blotto players to be in an equilibrium. 
These sufficient conditions for Blotto equilibria apply in some generality, and may be useful in the future for extending our Colonel Blotto results. 
The sufficient conditions reduce solving Colonel Blotto to the problem of constructing a joint distribution of bids with marginal bid distributions corresponding to a General Lotto equilibrium, subject to the constraint that the sum of each player's bids is almost surely equal to the player's budget.

For each of our three main theorems, we show the existence of the desired couplings constructively, by directly giving efficient linear-time algorithms to sample from the coupled distributions.
Each algorithm uses a different technique to couple the given marginal distributions. 
In order to prove Theorem~\ref{thm:nequalsmk}, we use special properties of the Dirichlet distribution. 
The crux of Theorem~\ref{thm:kequalsthree}'s proof involves efficiently transforming a list of battlefield valuations $(v_1,\ldots,v_n)$ into a corresponding matrix that rotates the 2-sphere about the origin in $n$-dimensional hyperspace.
Interestingly, we show that sampling from the surface of this rotated sphere and returning the squares of the coordinates yields a sample from a properly coupled distribution.
An interesting characteristic of this proof method is that the existence of a distribution (that couples the Lotto marginals) is established via an efficient sampling procedure,
in contrast to the typical approach of finding an efficient sampling procedure for a known distribution. 
Finally, in order to prove Theorem~\ref{thm:BooleanBlotto}, we use a greedy construction that couples arbitrary Bernoulli random variables subject to a budget constraint.



\subsection{Qualitative discussion of theorems} 

A principal goal of analyzing the Blotto game is to help applied researchers understand the qualitative differences that arise as the number of players or battlefields changes.
We interpret these limiting behaviors obtained from our derivations here.

For standard multiplayer Blotto (by Theorems~\ref{thm:nequalsmk} and~\ref{thm:kequalsthree}):
\begin{enumerate}
    \item 
    For a fixed number of players $k$, as the number of equally-valued battlefields increases, each player's bids become more evenly spread out across the battlefields, tending to the uniform distribution.
    \item
    When the number of players equals the number of battlefields, then the players play much higher bids on some battlefields than on others.
\end{enumerate}

In the Boolean-valued case (by Theorem~\ref{thm:BooleanBlotto})
\begin{enumerate}
    \item 
    Each player $i$ places a bid on each battlefield $j$ with some probability $p_j$. As the number of players $k$ tends to infinity, each equilibrium bid probability $p_j$ tends to $(v_j/V)\cB$, roughly speaking. (See Remark~\ref{rem:limitlargekbool}.)
    \item
    Similarly to the continuous-valued case, as the number of players increases, the bid probabilities become more spread out, in the sense that each player is more likely to compete in less valuable battlefields. (See Remark~\ref{rem:increaseingkstrategychangebool}.)
\end{enumerate}

For all three theorems above, we show how to sample efficiently from the joint distribution that we construct. In this sense, the Nash equilibria that we derive can be efficiently implemented in practice.

\subsection{Prior work}\label{subsec:priorwork}

\begin{table}[h]\label{tab:literature}\small
\rowcolors{1}{}{lightgray}
\begin{tabularx}{\textwidth}{m{6.5cm}YYYY}\hline
	
	& Asymmetric budgets & Heterogeneous values & $n>3$ battlefields & Number of players ($k$)\\ \hline
	
	Borel \& Ville \cite{borelville} &  &  &  & 2  \\
	
	Gross \& Wagner result 2 \cite{grosswagner} &  & \checkmark &  & 2  \\
	
	Gross \& Wagner result 3 \cite{grosswagner} &  &  & \checkmark & 2  \\
	
	Gross \cite{gross} &  & \checkmark & \checkmark & 2  \\
	
	Laslier \cite{laslier} &  & \checkmark & \checkmark & 2  \\
	
	Roberson \cite{roberson} & \checkmark &  & \checkmark & 2  \\
	
	Schwartz et al. \cite{schwartz} (partial result) & \checkmark & \checkmark & \checkmark & 2  \\
	
	Kovenock \& Roberson \cite{kovenock} (partial result) & \checkmark & \checkmark & \checkmark & 2  \\
	
	\textbf{Theorem~\ref{thm:nequalsmk} (partial result)} &  & \pmb\checkmark & \pmb\checkmark & $\mathbf{\geq 3}$ \\
	
	\textbf{Theorem~\ref{thm:kequalsthree} (handles most cases)} &  & \pmb\checkmark & \pmb\checkmark & \textbf{3} \\
	\hline
	
\end{tabularx}
\caption{Summary of equilibrium constructions in the continuous Blotto literature. Check marks indicate whether the result handles the condition in the column heading. Notable omissions: Myerson's multiplayer construction \cite{myerson}, which only provides equilibria for the Lotto game; results for the discrete setting; and our Theorem~\ref{thm:BooleanBlotto}, which solves our multiplayer Boolean Blotto setting.}
\end{table}

The Colonel Blotto game has been the subject of a considerable body of work over the course of a century. The game (both the discrete budget and continuous budget variations) was first introduced, without a general solution, by Émile Borel in 1921 \cite{borel}. This paper was, notably, the first ever in the nascent game theory literature to describe the concepts of pure and mixed strategies. Borel referred to Blotto as among the simplest games ``for which the manners of playing form a doubly infinite continuum.'' In 1938, Borel and Ville \cite{borelville} found equilibria for \emph{symmetric} \emph{homogeneous} three-battlefield Blotto. In a pair of papers in 1950, Gross and Wagner \cite{gross, grosswagner} found equilibria for all $n$, including for the heterogeneous setting. During the postwar period, there was a sizable classified military literature on Blotto in the United States \cite{blackett}.

In 1981, with applications to financial investment in mind, Bell and Cover introduced what, in modern terminology, would be called the one-battlefield \emph{General Lotto game} \cite{bellcover}. Myerson, apparently independently, described one-battlefield General Lotto in 1993, in the context of political economy \cite{myerson}. Myerson's paper is very relevant to our work because it appears to be the only prior work that considers generalizing Blotto (or rather, the easier-to-analyze Lotto) to a multiplayer setting. Myerson considers an infinite family of multiplayer generalizations corresponding to different voting systems; the natural multiplayer game we consider corresponds in this taxonomy to the plurality voting system. Myerson derived the unique symmetric Nash equilibria for these multiplayer General Lotto games; these correspond to the marginals of equilibrium strategies in our setting, as in Lemma~\ref{lem:colonelBlottosuffconds}. Note that Myerson dealt with General Lotto rather than Colonel Blotto precisely because it is easier to deal with: ``The advantage of my simplified formulation is that it will enable us to go beyond this `Colonel Blotto' literature and get results about more complicated situations in which more than two candidates are competing.'' In our paper, we obtain results in these complicated situations in the rich regime of Blotto.

The current century has seen a resurgence of interest in the Blotto game \cite{laslierpicard, laslier}.
In a landmark 2006 paper, Roberson found equilibria for all $n$ for the homogeneous non-symmetric setting \cite{roberson}. A string of recent works has worked towards the still incomplete goal of characterizing solutions to Blotto in the heterogeneous non-symmetric setting \cite{schwartz, kovenock, thomas}. Kovenock and Roberson's paper \cite{kovenock} includes a survey of progress on this question.

A simultaneous recent line of work has dealt with the \emph{discrete} version of Colonel Blotto, in which players' budgets are composed of indivisible units (i.e., their bids must lie in $\ZZ_{\geq 0}$). Our Boolean Blotto game can be thought of as a restricted version of discrete Blotto in which bids must lie in $\{0,1\}$. In 2008, Hart solved homogeneous symmetric discrete Blotto, and gave the General Lotto game its name \cite{hart}. In 2012, Dziubi\'nski solved non-symmetric discrete General Lotto \cite{dziubinski2013non}. Also in the discrete setting, Hortala-Vallve and Llorente-Saguer introduced a variant of Blotto in which the two players can value battlefields differently, and identified some pure strategy equilibria for this case \cite{hortala}. Many other variations of Blotto have been introduced over the decades in both the continuous and discrete settings \cite{tukey, shubikweber, kvasov, golmanpage}.

Several recent papers have given algorithms for variations of the discrete Blotto game, typically in time polynomial in the number of battlefields $n$ and the size of players' budgets \cite{ahmadinejad, behnezhad, behnezhadblum, behnezhadblum2}. Our algorithms for the continuous and Boolean-valued settings, in contrast, have running time polynomial in $n$ and the logarithm of the budget size.

Another side of the Blotto literature applies the Blotto model to various social science settings. In addition to the early military applications and later political economy and finance applications, Blotto has also been used to study topics such as U.S. presidential elections \cite{merolla}, terrorism \cite{arce}, phishing \cite{chia}, and advertising \cite{friedman}. It is closely related to the study of all-pay auctions \cite{baye}. Still another line of work, experimental in nature, tries to determine what strategies people will actually use in real-life Blotto games---see \cite{dechenaux} for a survey.

\subsection{Organization of paper}

The remainder of the paper contains three sections.
Section~\ref{sec:multiplayerBlotto} formally defines and solves cases of the multiplayer continuous Blotto game, and
Section~\ref{sec:BooleanBlotto} formally defines and solves the multiplayer Boolean Blotto game.
Finally, we end with some remarks and open problems in Section~\ref{sec:remarks}.

\section{Colonel Blotto equilibria}
\label{sec:multiplayerBlotto}

In this section, we formally define the General Lotto and Colonel Blotto games for multiple players, solve for their equilibria and construct efficient sampling methods for the equilibrium strategies.
The Colonel Blotto equilibria are presented in Theorems~\ref{thm:nequalsmk} and Theorem~\ref{thm:kequalsthree}. 
We begin by formally defining the Blotto game.
\begin{definition}
\label{def:multiplayer-Blotto}
The multiplayer {\em Colonel Blotto} game is specified by a tuple 

$$\left(k \in \NN, n \in \NN, \vec{\B} \in \RR_{\geq 0}^n, \vec{v} \in \RR_{\geq 0}^n \right),$$ where $k$ is the number of players, $n$ is the number of battlefields,
$\B_i$ is the budget of player $i \in [k]$, and
$v_j$ is the value of the battlefield $j \in [n]$.
We denote the sum total of the battlefield values by $V = \lone{\vec{v}} = \sum_{j=1}^n v_j$. 

Each player $i \in [k]$ plays a bid vector $A_{i,*} = (A_{i,1},\ldots,A_{i,n}) \in \RR_{\geq 0}^n$ satisfying the budget constraint
$$\lone{A_{i,*}} = \sum_{j \in [n]} A_{i,j} \leq \B_i.$$  Let the {\em bid matrix} $A = (A_{i,j})_{(i,j) \in [k] \times [n]}$ be the matrix whose $i$th row is $A_{i,*}$. For each $i \in [k]$, the payoff for player $i$ is
$$U_i(A) := \sum_{j \in [n]} U_{i,j}(A) := \sum_{j \in [n]} v_{j} \cdot \left(\frac{\mathbbm{1}(i \in \arg\max_{i' \in [k]} A_{i',j})}{|\arg \max_{i'} A_{i',j}|}\right).$$
In words: each battlefield's value is split evenly among the players who tied for the highest bid on that battlefield. The game is called {\em symmetric} if all the player budgets are equal, and {\em homogeneous} if all battlefield values are equal.
\end{definition}

A result of Dasgupta and Maskin establishes the existence of Nash equilibria for all values of $k$ and $n$, and guarantees the existence of symmetric equilibria in the symmetric-budget setting \cite{dasgupta1986existence}. In this paper we give explicit symmetric equilibria for the symmetric setting. Our first theorem holds for any number of players, but restricted battlefield values:

\begin{restatable}{theorem}{nequalsmk} \label{thm:nequalsmk}

Suppose that in the Colonel Blotto game with equal budgets $\cB_i = 1$, we are given a $k$-partition $\pi : [n] \to [k]$ of the battlefields such that there is equal value on each set of the partition:

$$\sum_{l \in \pi^{-1}(m)} v_l = \frac{1}{k} \sum_{l=1}^n v_l = \frac{V}{k} \quad \forall m \in [k].$$

Then if each of the players independently runs Algorithm~\ref{alg:nequalskcolonelBlottonashequilibrium}, the players will be in Nash equilibrium. Moreover, Algorithm~\ref{alg:nequalskcolonelBlottonashequilibrium} runs in $O(n)$ time.
\end{restatable}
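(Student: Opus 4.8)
The plan is to follow the two-stage strategy flagged in the proof overview: first pin down the target marginal bid distributions by solving the associated symmetric General Lotto game, and then exhibit an explicit coupling of these marginals into a joint bid vector whose coordinates sum to the budget almost surely. By the sufficient condition of Lemma~\ref{lem:colonelBlottosuffconds}, producing such a coupling---with per-battlefield marginals equal to the unique symmetric Lotto equilibrium and with $\sum_{j} A_{i,j} = 1$ with probability one---already certifies that the resulting symmetric profile is a Nash equilibrium of Colonel Blotto. The whole argument therefore reduces to constructing and analyzing this coupling, which is what Algorithm~\ref{alg:nequalskcolonelBlottonashequilibrium} is meant to sample from.

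First I would compute the Lotto marginals. Extending Myerson's homogeneous analysis to the heterogeneous symmetric setting, the Lagrangian indifference condition forces each player's equilibrium bid on battlefield $j$ to have CDF $F_j(x) = (x/b_j)^{1/(k-1)}$ supported on $[0,b_j]$, where $b_j = k v_j / V$ and the common multiplier is fixed to $V/k$ by the budget constraint $\sum_j \mathbb{E}[F_j] = 1$. Two consequences are crucial. The normalized bids $\tilde A_j := A_j/b_j$ all share the \emph{same} law $\mathrm{Beta}(1/(k-1),1)$, independent of $j$; and, under the equal-value partition hypothesis, the maximal bids within each block sum exactly to the budget, since $\sum_{l \in \pi^{-1}(m)} b_l = (k/V)\sum_{l \in \pi^{-1}(m)} v_l = 1$ for every $m \in [k]$.

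These two facts dictate the coupling, and this is where the special properties of the Dirichlet distribution enter. Each player samples a single vector $(T_1,\ldots,T_k) \sim \mathrm{Dir}(1/(k-1),\ldots,1/(k-1))$ and sets $A_j = b_j\, T_{\pi(j)}$, so every battlefield in block $m$ receives its maximal bid scaled by a common random factor $T_m$. Two properties of this symmetric Dirichlet do all the work: (i) it is supported on the simplex, so $\sum_m T_m = 1$ deterministically, and combined with $\sum_{l \in \pi^{-1}(m)} b_l = 1$ this gives $\sum_j A_j = \sum_m T_m \cdot 1 = 1$ surely, enforcing the hard budget constraint; and (ii) each coordinate is marginally $\mathrm{Beta}\!\left(1/(k-1),\,(k-1)\cdot\tfrac{1}{k-1}\right) = \mathrm{Beta}(1/(k-1),1)$, so $A_j = b_j T_{\pi(j)}$ has exactly CDF $F_j$. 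Thus the joint distribution reproduces the Lotto marginals while forcing the sum constraint, which is precisely what Lemma~\ref{lem:colonelBlottosuffconds} demands.

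To finish I would handle the running time and the minor regularity needed to apply the lemma. Sampling the Dirichlet reduces to drawing $k$ independent Gamma variates and normalizing, after which the $n$ bids $A_j = b_j T_{\pi(j)}$ are read off in a single pass, giving the claimed $O(n)$ bound; and since the marginals are atomless, ties occur with probability zero, so the tie-splitting rule never affects the payoffs. The main obstacle---exactly the one Myerson singled out and the overview emphasizes---is the coupling itself: matching every heterogeneous marginal $F_j$ simultaneously while forcing the bids to sum to the budget almost surely. The equal-value hypothesis is the structural ingredient that dissolves this difficulty, because it makes the per-block maximal bids sum to $1$ and thereby lets a single simplex-valued Dirichlet vector carry both the correct marginals and the exact sum constraint at once.
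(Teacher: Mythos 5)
Your proposal is correct and follows essentially the same route as the paper: invoke the sufficient conditions of Lemma~\ref{lem:colonelBlottosuffconds}, then couple the $\mathrm{Beta}(1/(k-1),1)$ marginals via a single symmetric Dirichlet vector $\mathrm{Dir}(1/(k-1),\ldots,1/(k-1))$ shared across partition blocks, using its simplex support for the hard budget constraint and its Beta marginals for condition (b), exactly as in Algorithm~\ref{alg:nequalskcolonelBlottonashequilibrium}. The only difference is that you re-derive the Lotto marginals from scratch, which the paper packages as Lemma~\ref{lem:generallottocontinuousvalued}; the substance is identical.
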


The following important special case of this theorem immediately follows by defining $\pi(j) := (j \mod k) + 1$.
\begin{corollary} \label{cor:equalbudgets}
Suppose that in the Colonel Blotto game with equal budgets $\cB_i = 1$ there are $n = mk$ battlefields of equal value $v_i = V/n$, for some $m \in \N$. Then Algorithm~\ref{alg:nequalskcolonelBlottonashequilibrium} gives a Nash equilibrium in $O(n)$ time.
\end{corollary}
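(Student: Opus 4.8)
The plan is to recognize that this corollary is nothing more than Theorem~\ref{thm:nequalsmk} specialized to the explicit $k$-partition $\pi(j) := (j \bmod k) + 1$, so all the real content---the equilibrium construction, the Nash verification, and the $O(n)$ time bound---is inherited directly from that theorem. The only thing I would need to establish is that this particular $\pi$ satisfies the theorem's hypothesis that every partition class carries equal total value $V/k$.

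The key step is a short counting argument. First I would check that $\pi$ maps $[n] = \{1, \ldots, mk\}$ onto $[k]$ (the ``$+1$'' shift is precisely what moves the residues $\{0, \ldots, k-1\}$ into $\{1, \ldots, k\}$), and that because $k$ divides $n = mk$, each fiber is an arithmetic progression of common difference $k$ containing exactly $n/k = m$ indices; for instance $\pi^{-1}(1) = \{k, 2k, \ldots, mk\}$. Since the battlefields are homogeneous with common value $v_i = V/n = V/(mk)$, each class then has total value
\begin{equation*}
\sum_{l \in \pi^{-1}(m')} v_l \;=\; m \cdot \frac{V}{mk} \;=\; \frac{V}{k} \qquad \text{for every } m' \in [k],
\end{equation*}
which is exactly the hypothesis required by Theorem~\ref{thm:nequalsmk}.

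Having verified the hypothesis, I would simply invoke Theorem~\ref{thm:nequalsmk} to obtain both the Nash equilibrium via Algorithm~\ref{alg:nequalskcolonelBlottonashequilibrium} and its $O(n)$ running time, completing the proof. There is no genuine obstacle here: the sole point at which the argument could fail is an imbalance among the fiber sizes, and that possibility is ruled out exactly by the divisibility assumption $n = mk$. The only thing to watch is the off-by-one in the definition of $\pi$, so that the class indices land in $[k]$ as required by Definition~\ref{def:multiplayer-Blotto}.
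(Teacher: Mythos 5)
Your proposal is correct and matches the paper's argument exactly: the paper also derives the corollary by instantiating Theorem~\ref{thm:nequalsmk} with the partition $\pi(j) = (j \bmod k) + 1$, and your verification that each of the $k$ fibers contains $m$ battlefields of value $V/(mk)$ (hence total value $V/k$) is the one-line check the paper leaves implicit. Nothing further is needed.
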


Our second main theorem holds only for three player games ($k = 3$ case), but allows us to handle a much wider range of battlefield valuations:

\begin{restatable}{theorem}{kequalsthree} \label{thm:kequalsthree}
Suppose that in the $3$-player Colonel Blotto game with equal budgets $\cB_i = 1$, the valuations satisfy $$v_j \leq \frac{V}{3}, \quad \forall j \in [n].$$
Then if each of the players independently runs \texttt{SampleBid}$(\vec{v})$ from Algorithm~\ref{alg:kequalsthree}, the players will be in Nash equilibrium. Moreover, Algorithm~\ref{alg:kequalsthree} runs in $O(n)$ time.
\end{restatable}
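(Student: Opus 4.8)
The plan is to follow the General-Lotto-then-couple strategy from the proof overview, specialized to $k=3$. First I would record the marginal bid distributions of the symmetric three-player General Lotto equilibrium. A Lagrangian/indifference argument (the $k=3$ case of the heterogeneous extension of Myerson's analysis, established earlier in the section) forces each player's bid on battlefield $j$ to have cumulative distribution function $F_j(x)=\sqrt{Vx/(3v_j)}$ supported on $[0,\,3v_j/V]$; equivalently the marginal of $A_{i,j}$ is a scaled Beta, $\tfrac{3v_j}{V}\cdot\mathrm{Beta}(\tfrac12,1)$, with mean $v_j/V$ so that the expected spend is $\sum_j v_j/V = 1 = \cB_i$. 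By Lemma~\ref{lem:colonelBlottosuffconds}, to prove the players are in Nash equilibrium it suffices to exhibit, for one player, a joint distribution on $\RR_{\geq 0}^n$ whose marginals are exactly the $F_j$ and whose coordinates sum to $1$ almost surely. Note that the hypothesis $v_j\le V/3$ is equivalent to $3v_j/V\le 1$, i.e.\ every battlefield's support fits inside $[0,1]$; this is clearly necessary for any such coupling to exist, and it will turn out to be sufficient.

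The coupling I would construct is the sphere map hinted at in the overview. Build an $n\times 3$ real matrix $M$ with orthonormal columns, $M^\top M = I_3$, whose $j$-th row $\vec m_j\in\RR^3$ satisfies $\lVert \vec m_j\rVert^2 = 3v_j/V$. To sample a bid, draw $X$ uniformly from the unit sphere $\mathbb{S}_2\subset\RR^3$ and output the vector with $A_j = \langle \vec m_j, X\rangle^2$. The budget constraint then holds identically, since $\sum_j A_j = \sum_j (MX)_j^2 = \lVert MX\rVert^2 = X^\top M^\top M X = \lVert X\rVert^2 = 1$, using $M^\top M = I_3$. For the marginals I would invoke Archimedes' hat-box theorem, which is special to the $2$-sphere: for $X$ uniform on $\mathbb{S}_2$ and any unit direction $\hat u$, the projection $\langle \hat u, X\rangle$ is uniform on $[-1,1]$, and by rotational invariance this holds in every direction. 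Hence $\langle \vec m_j, X\rangle = \lVert \vec m_j\rVert\, U$ with $U\sim\mathrm{Unif}[-1,1]$, so $A_j = \lVert \vec m_j\rVert^2 U^2$ has $\PP(A_j\le x) = \PP\!\big(U^2\le x/\lVert \vec m_j\rVert^2\big) = \sqrt{x/\lVert \vec m_j\rVert^2} = \sqrt{Vx/(3v_j)} = F_j(x)$, exactly the Lotto marginal. (In the equal-value case $n=3$ this is precisely the classical fact that squaring a uniform point of $\mathbb{S}_2$ gives a $\mathrm{Dirichlet}(\tfrac12,\tfrac12,\tfrac12)$ sample.) Thus the sampled vector has the correct marginals and satisfies the budget exactly, and Lemma~\ref{lem:colonelBlottosuffconds} finishes the equilibrium claim.

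The step I expect to be the main obstacle is the existence and the $O(n)$-time construction of the matrix $M$. Summing the target row norms gives $\sum_j 3v_j/V = 3 = \sum_r \lVert \mathrm{col}_r\rVert^2$, so the counting is consistent; the real requirement is that $MM^\top$ be a rank-$3$ orthogonal projection, i.e.\ a symmetric matrix with eigenvalues $(1,1,1,0,\ldots,0)$ and prescribed diagonal $(3v_1/V,\ldots,3v_n/V)$. By the Schur--Horn theorem such a matrix exists if and only if the diagonal vector is majorized by the eigenvalue vector, and given $\sum_j 3v_j/V = 3$ this majorization holds precisely when each $3v_j/V \le 1$, that is, when $v_j\le V/3$. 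This recovers the theorem's hypothesis as exactly the feasibility condition for the coupling. To obtain the claimed linear running time I would not appeal to Schur--Horn abstractly but give a direct greedy/pivoting construction in the spirit of finite-frame algorithms (using plane/Givens rotations), peeling off the rows of $M$ while maintaining column-orthonormality, all in $O(n)$ time; sampling $X$ uniformly on $\mathbb{S}_2$ and forming $MX$ then also costs $O(n)$. Proving that this explicit procedure terminates, returns orthonormal columns, and attains the prescribed row norms under $v_j\le V/3$ is the main technical work, and it is precisely the ``transform the valuations into a rotation of the $2$-sphere in $n$-dimensional space'' step advertised in the overview.
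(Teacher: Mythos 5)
Your proposal is correct and follows essentially the same route as the paper: reduce to the sufficient conditions of Lemma~\ref{lem:colonelBlottosuffconds}, build an $n\times 3$ matrix $M$ with $M^\top M = I_3$ and row norms $\|M_{j,*}\|^2 = 3v_j/V$, sample uniformly from $\mathbb{S}_2$, and output the squared projections, with the marginals following from rotational invariance (Proposition~\ref{prop:unif2spheredotprod}). The greedy plane-rotation construction of $M$ that you defer as ``the main technical work'' is exactly what the paper's \texttt{ConstructM}/\texttt{RotatePair} subroutines carry out (Claims~\ref{claim:constructmcorrectness} and~\ref{claim:rotatepaircorrectness}), and your Schur--Horn observation correctly identifies $v_j \leq V/3$ as the feasibility condition.
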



The main difficulty in proving Theorems~\ref{thm:nequalsmk} and~\ref{thm:kequalsthree} is that the strict budget constraint of the Blotto game generally means that a given player's bid distributions on the various battlefields have to be correlated, so that any bid-vector sampled from this joint distribution sums to one.
That is, each player $i$'s bids must be coupled in some potentially complicated way so that the budget constraint $\sum_{j=1}^n A_{i,j} \leq \cB_i$ holds with probability 1 over player $i$'s mixed strategy. 
In order to overcome this difficulty, we follow the meta-approach of \cite{hart} and prove both theorems by first analyzing the simpler General Lotto game. 
This is a variant of the Colonel Blotto game in which the budget constraints are relaxed to hold only in expectation over each player's bids, instead of almost surely:
\begin{definition}\label{def:multiplayer-lotto}
An instance of the {\em General Lotto} game is specified by a tuple $(k,n,\vec{\cB},\vec{v})$, as in the Colonel Blotto game. However, instead of playing a real-valued bid for each battlefield, each player plays a distribution of bids. For each $i \in [k]$ and $j \in [n]$, player $i$ plays a distribution $\cD_{i,j}$ over $ \RR_{\geq 0}$ such that the budget constraint is met in expectation:
$$\sum_{j=1}^n \EE_{A_{i,j} \sim \cD_{i,j}}[A_{i,j}] \leq \cB_i.$$
The payoff function for player $i \in [k]$ given the bids of all the players is $\EE_A U_i(A),$ where for each $i' \in [k],j' \in [k]$ the bids $A_{i',j'} \sim \cD_{i',j'}$ are drawn independently.
\end{definition}
Given a Nash equilibrium $(\cD_{i,j})_{i \in [k], j\in [n]}$ of the General Lotto problem, our approach will be to try to convert $(\cD_{i,j})_{i,j}$ into a Nash equilibrium of the Colonel Blotto problem. Our objective will be to construct a random variable $A \in \RR_{\geq 0}^{k \times n}$ such that the rows $A_{i,*}$ are independent of each other, such that $A_{i,j} \sim \cD_{i,j}$ for each $i \in [k], j \in [n]$, and such that the budget constraint $\lone{A_{i,*}} \leq \cB_i$ holds for each $i \in [k]$ almost surely. These conditions will ensure that $A$ is a mixed Nash equilibrium for the Colonel Blotto problem. We realize this program as follows: in Section~\ref{subsec:contgenlottoequil}, we characterize symmetric General Lotto equilibria, in Section~\ref{subsec:contsuffcond} we derive a sufficient condition for symmetric Colonel Blotto equilibria, and in Sections~\ref{subsec:thmnequalsmk} and~\ref{subsec:thmkequals3} we use this sufficient condition to prove Theorems~\ref{thm:nequalsmk} and~\ref{thm:kequalsthree}.

\subsection{General Lotto equilibria} \label{subsec:contgenlottoequil}

We now construct symmetric multiplayer General Lotto equilibria. Our construction is similar to Myerson \cite{myerson}, who constructed equilibria for the homogeneous case and proved that they were unique. Similar arguments to \cite{myerson} would prove uniqueness of our construction in the heterogeneous case, but for the sake of brevity we omit these arguments since they are not necessary in order to obtain sufficient conditions for Colonel Blotto equilibria. First, recall the definition of the Beta distribution:
\begin{definition}
For any $\alpha,\beta > 0$, the $\mathrm{Beta}(\alpha,\beta)$ distribution is the distribution supported on the interval $[0,1]$ with PDF proportional to $x^{\alpha-1}(1-x)^{\beta-1}$. In particular, if $X \sim \mathrm{Beta}(\alpha,1)$, then the CDF is $\mathbb{P}[X \leq x] = x^{\alpha}$ for all $x \in [0,1]$.
\end{definition}

\begin{lemma} \label{lem:generallottocontinuousvalued}
Consider the (continuous-valued) symmetric multiplayer General Lotto game $(k, n, \vec{\mathcal{B}} = \vec{\mathbbm{1}}, \vec{v})$ with $k \geq 2$ players and equal budgets $\cB_i = 1$. Suppose that for each $i \in [k]$ and $j \in [n]$, player $i$ plays distribution $\cD_{i,j} = \frac{kv_j}{V} \cdot \mathrm{Beta}(\frac{1}{k-1},1)$ on battlefield $j$. Then the players are in Nash equilibrium.
\end{lemma}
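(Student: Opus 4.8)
The plan is to fix an arbitrary player, freeze the other $k-1$ players at the prescribed strategy, and show that this same strategy is also a best response for the remaining player; since the configuration is symmetric, this establishes that the profile is a Nash equilibrium. The guiding idea is that the parameters $\mathrm{Beta}(\tfrac{1}{k-1},1)$ and the scaling $\tfrac{kv_j}{V}$ are chosen precisely so that the expected marginal return per unit of budget spent is the same constant $V/k$ on every battlefield, which collapses the optimization over where to allocate budget.

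First I would record two basic facts. Since $\E[\mathrm{Beta}(\tfrac{1}{k-1},1)] = \tfrac{1/(k-1)}{1/(k-1)+1} = \tfrac1k$, the mean bid on battlefield $j$ is $\E[\cD_{i,j}] = \tfrac{kv_j}{V}\cdot\tfrac1k = \tfrac{v_j}{V}$, so $\sum_{j}\E[\cD_{i,j}] = 1$ and the budget constraint holds with equality. Letting $F_j$ denote the CDF of $\cD_{i,j}$, the scaling together with $\mathbb{P}[\mathrm{Beta}(\tfrac{1}{k-1},1)\le x] = x^{1/(k-1)}$ gives $F_j(b) = \big(\tfrac{Vb}{kv_j}\big)^{1/(k-1)}$ for $b\in[0,\tfrac{kv_j}{V}]$, hence the clean identity $F_j(b)^{k-1} = \tfrac{Vb}{kv_j}$ on this interval. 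I would also note that $\cD_{i,j}$ is atomless (its density is proportional to $x^{(2-k)/(k-1)}$, which is integrable and has no point mass for $k\ge 2$), so ties among bids occur with probability zero and may be ignored throughout.

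The crux is the payoff computation. If the remaining player places a fixed bid $b\ge 0$ on battlefield $j$ against $k-1$ opponents drawing independently from $\cD_{i,j}$, then up to the measure-zero tie event the player wins battlefield $j$ exactly when $b$ exceeds all $k-1$ opposing bids, an event of probability $F_j(b)^{k-1}$. Thus the expected value earned on battlefield $j$ is $v_jF_j(b)^{k-1} = \tfrac{Vb}{k}$ when $b\le \tfrac{kv_j}{V}$, and equals $v_j$ when $b > \tfrac{kv_j}{V}$. The key consequence is the uniform bound $v_jF_j(b)^{k-1}\le \tfrac{Vb}{k}$ for every $b\ge0$ and every $j$, with equality on the support: the return is linear with the same slope $V/k$ on all battlefields, and overbidding past $\tfrac{kv_j}{V}$ is strictly wasteful.

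To finish I would bound an arbitrary deviation and compare. For any (possibly randomized) deviating strategy with bid $B_j$ on battlefield $j$ of mean $\mu_j = \E[B_j]$ satisfying $\sum_j\mu_j\le 1$, linearity of expectation and the uniform bound give total expected payoff $\sum_j \E[v_jF_j(B_j)^{k-1}] \le \sum_j \tfrac{V}{k}\mu_j = \tfrac{V}{k}\sum_j\mu_j \le \tfrac{V}{k}$. Meanwhile the prescribed strategy bids within the support almost surely, so each battlefield contributes exactly $\tfrac{V}{k}\E[\cD_{i,j}] = \tfrac{v_j}{k}$, for a total of $\tfrac{V}{k}$ (equivalently, by symmetry each of the $k$ players wins each battlefield with probability $1/k$). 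Hence no deviation beats the prescribed strategy, which is therefore a best response to itself, and the profile is a Nash equilibrium. I expect the only points needing genuine care to be the reduction from randomized to pure deviations and the justification for ignoring ties, both of which are dispatched cleanly by the atomlessness of the distributions and the linear upper bound $v_jF_j(b)^{k-1}\le \tfrac{Vb}{k}$.
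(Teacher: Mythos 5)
Your proposal is correct and follows essentially the same route as the paper: verify the budget constraint via $\E[\mathrm{Beta}(\tfrac{1}{k-1},1)]=\tfrac1k$, compute the per-battlefield payoff of a deviating bid $b$ as $v_j\min\bigl(1,\tfrac{V}{kv_j}b\bigr)\le\tfrac{V}{k}b$ using the Beta CDF and atomlessness, sum over battlefields against the soft budget constraint to get the upper bound $V/k$, and observe by symmetry that the prescribed strategy attains it. The only differences are expository (you make the linear-return-per-unit-budget intuition and the tie-handling slightly more explicit), not substantive.
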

\begin{proof}
For this proof, let $X \sim \mathrm{Beta}(1/(k-1),1)$. First, the General Lotto budget constraint is satisfied for all $i \in [k]$ $$\sum_{j=1}^n \EE_{A_{i,j} \sim \cD_{i,j}}[A_{i,j}] = \sum_{j=1}^n \frac{kv_j}{V}\E\left[X\right] = \sum_{j=1}^n \frac{v_j}{V} = 1 = \cB_i.$$

Now suppose that player $k$ deviates by playing distributions $\cD'_{k,1},\ldots,\cD'_{k,n}$ meeting the General Lotto budget constraint. For all $i \in [k-1], j \in [n]$ let $A_{i,j} \sim \cD_{i,j}$, and $A_{k,j} \sim \cD'_{k,j}$ be independent random variables.
The expected payoff of player $k$ from battlefield $j$ is 
\begin{align*}
\EE[U_{k,j}(A) \mid A_{k,*}] &= v_j \cdot \PP[\forall i \in [k-1], A_{i,j} \leq A_{k,j} \mid A_{k,j}] \\ 
 &= v_j \cdot \prod_{i=1}^{k-1}\PP[A_{i,j} \leq A_{k,j} \mid A_{k,j}] = v_j \cdot \left(\PP\left[\frac{k v_j}{V} \cdot X \leq A_{k,j}\right]\right)^{k-1} \\ 
 &= v_j \cdot \min\left(1, \left(\frac{V}{k v_j} \cdot A_{k,j}\right)^{1/(k-1)}\right)^{k-1} 
 = v_j \cdot \min\left(1,\frac{V}{k v_j} A_{k,j}\right) \leq \frac{V}{k}A_{k,j},
\end{align*} 
where we have used that ties are measure-zero events. Therefore, 
$$\EE[U_k(A)] = \sum_{j \in [n]} \EE[U_{k,j}(A)] \leq \frac{V}{k} \cdot \sum_{j \in [n]} \EE[A_{k,j}] \leq \frac{V}{k}$$ 
The last inequality is the General Lotto budget constraint. By symmetry between the players, if $\cD'_{k,j} = \cD_{k,j}$ for all $j \in [n]$ then this upper bound is achieved: 
$\EE[U_k(A)] = \frac{V}{k}$. So playing according to Lemma~\ref{lem:generallottocontinuousvalued} is indeed a Nash equilibrium.
\end{proof}

\subsection{Sufficient Conditions for Colonel Blotto equilibrium} \label{subsec:contsuffcond}
The General Lotto equilibria of Lemma~\ref{lem:generallottocontinuousvalued} immediately give sufficient conditions for players to be in Colonel Blotto equilibrium:
\begin{lemma}
\label{lem:colonelBlottosuffconds} 
Consider the symmetric Colonel Blotto game $(k, n, \vec{\mathcal{B}} = \vec{\mathbbm{1}}, \vec{v})$. 
The players are in equilibrium if each player $i \in [k]$ independently bids a random vector of bids $A_{i,*} = (A_{i,1},\ldots,A_{i,n})$ such that:
\begin{enumerate}
\centering
    \item[(a)] $\sum_{j=1}^n A_{i,j} \leq 1 = \mathcal{B}_i$. \qquad {\em (b)} $A_{i,j} \sim \frac{kv_j}{V} \cdot \mathrm{Beta}(1/(k-1), 1)$.
\end{enumerate}

\end{lemma}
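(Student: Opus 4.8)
The plan is to derive this lemma almost immediately from the General Lotto analysis in Lemma~\ref{lem:generallottocontinuousvalued}, exploiting two observations: the almost-sure budget constraint (a) is strictly stronger than the in-expectation constraint of General Lotto, and the Colonel Blotto payoff against a fixed opponent profile depends on the bids only through their per-battlefield marginals. First I would check that the prescribed profile is a feasible Colonel Blotto strategy: condition (a) guarantees $\lone{A_{i,*}} \le 1 = \cB_i$ almost surely for each player. It then remains to show that no player can profitably deviate, and by symmetry it suffices to fix player $k$ and let them switch to an arbitrary bid vector $A_{k,*}$ obeying $\sum_j A_{k,j} \le 1$ almost surely, while players $1,\dots,k-1$ keep the prescribed marginals of condition (b).

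Next I would decompose the payoff over battlefields, $\EE[U_k(A)] = \sum_{j\in[n]} \EE[U_{k,j}(A)]$, and observe that since the rows $A_{i,*}$ are independent across players, the opponents' bids $A_{1,j},\dots,A_{k-1,j}$ on any single battlefield $j$ are mutually independent with exactly the marginals $\frac{kv_j}{V}\cdot\mathrm{Beta}(1/(k-1),1)$. Consequently the conditional computation
\[
\EE[U_{k,j}(A)\mid A_{k,j}] = v_j\prod_{i=1}^{k-1}\PP[A_{i,j}\le A_{k,j}\mid A_{k,j}] = v_j\cdot\min\!\left(1,\tfrac{V}{kv_j}A_{k,j}\right)\le \tfrac{V}{k}A_{k,j}
\]
is verbatim the one from the proof of Lemma~\ref{lem:generallottocontinuousvalued}; crucially it uses only the opponents' marginals and the realized value $A_{k,j}$, so it is completely insensitive to how player $k$'s bids are coupled across battlefields (ties contribute nothing, since the opponents' marginals are atomless). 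Summing and applying linearity of expectation gives $\EE[U_k(A)] \le \frac{V}{k}\,\EE\big[\sum_j A_{k,j}\big] \le \frac{V}{k}$, where the final step is exactly where the almost-sure constraint (a) enters, via $\EE[\sum_j A_{k,j}]\le 1$.

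Finally I would verify that the prescribed strategy attains this bound, so that the deviation cannot strictly improve on it. When every player uses the marginals of (b), each battlefield is won by a unique player almost surely (atomless marginals make ties measure-zero), so by symmetry each player wins each battlefield with probability $1/k$, giving $\EE[U_k(A)] = \sum_j v_j/k = V/k$. Hence no deviation beats $V/k$ and the profile is a Nash equilibrium.

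I expect essentially no genuine obstacle here, as the quantitative content is entirely inherited from Lemma~\ref{lem:generallottocontinuousvalued}. The only conceptual points requiring a word of care are the \emph{reduction itself}---namely that every feasible Colonel Blotto strategy is a fortiori a feasible General Lotto strategy, so the Lotto upper bound on the deviator's payoff applies unchanged, and that the per-battlefield expected payoff depends on player $k$'s strategy only through its marginals---together with the measure-zero handling of ties that legitimizes both the product formula and the symmetric $V/k$ computation.
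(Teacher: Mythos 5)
Your proposal is correct and follows essentially the same route as the paper: the paper's proof is a three-sentence reduction observing that (a) gives feasibility, that by linearity of expectation the payoff depends only on the per-battlefield marginals, and that any Colonel Blotto strategy is a General Lotto strategy, so Lemma~\ref{lem:generallottocontinuousvalued} bounds any deviator's payoff. You have simply unfolded that reduction in full detail (re-deriving the per-battlefield bound and the $V/k$ attainment), which is a faithful expansion rather than a different argument.
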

\begin{proof}
The budget constraints are met by (a). By linearity of expectation, the utilities only depend on the marginal distributions of the players' bids for each battlefield. So, if any player deviates from the strategy, then by Lemma~\ref{lem:generallottocontinuousvalued} and the fact that any Colonel Blotto strategy is also a General Lotto strategy, the deviating player's utility cannot improve. 
\end{proof}

Therefore, we have reduced the problem of computing Colonel Blotto equilibria to the problem of coupling Beta-distributed variables so as to satisfy the budget constraint.
In the following two sections, we give computationally efficient constructions of such couplings in order to prove Theorems~\ref{thm:nequalsmk} and~\ref{thm:kequalsthree}.

\begin{remark}[Blotto $\neq$ Lotto] \label{rem:lottoneqBlotto}
The conditions in Lemma~\ref{lem:colonelBlottosuffconds} are not necessary for players to be in Blotto equilibrium. For example, in the Colonel Blotto game specified by $(k=2,n=1,\vec{\cB} = \mathbbm{1},\vec{v} = \mathbbm{1})$, Lemma~\ref{lem:colonelBlottosuffconds} would require the distribution $2 \cdot \mathrm{Beta}(1,1)$, which is equal to $\mathrm{Unif}[0,2]$, to have support $\leq 1$ in order to meet condition (a). Clearly this is not the case, so the conditions of Lemma~\ref{lem:colonelBlottosuffconds} are not satisfied, and yet the Colonel Blotto game still has an equilibrium (in which both players play all of their budget on the one battlefield).
\end{remark}

\subsection{Couplings for arbitrary numbers of players (Theorem~\ref{thm:nequalsmk})} \label{subsec:thmnequalsmk}
We now prove Theorem~\ref{thm:nequalsmk} using the sufficient condition of Lemma~\ref{lem:colonelBlottosuffconds}. We will make use of a property of 
the {\em multivariate Beta distribution}---also known as the {\em Dirichlet distribution}.

\begin{definition}
\label{def:dirichlet-distribution}
The {\em Dirichlet distribution} $\text{Dir}(\alpha_1,\ldots,\alpha_m)$ is the distribution on the $(m-1)$-simplex $\Delta_{m-1}$ with density function $f(\vec{x};\vec{\alpha}) \propto \prod_{i = 1}^{m} x_i^{\alpha_i - 1}$.
\end{definition}


\begin{proposition}[folklore, e.g. \cite{lin2016dirichlet}] \label{prop:dirichletproperties}
Let $(X_1,\ldots,X_m) \sim \mathrm{Dir}(\alpha_1,\ldots,\alpha_m)$. Then
\begin{enumerate}
\item[(i)] For each $i \in [m]$, $X_i \sim \mathrm{Beta}\left(\alpha_i, \sum_{j \ne i}\alpha_j \right)$.
\item[(ii)] $\sum_{i=1}^m X_i = 1$ almost surely
\end{enumerate}
\end{proposition}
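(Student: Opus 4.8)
The plan is to handle the two parts separately, since (ii) is essentially definitional while (i) requires a short marginalization argument. For part (ii), I would simply observe that by Definition~\ref{def:dirichlet-distribution} the distribution $\mathrm{Dir}(\alpha_1,\ldots,\alpha_m)$ is supported on the simplex $\Delta_{m-1}$, and every point $\vec{x} \in \Delta_{m-1}$ satisfies $\sum_{i=1}^m x_i = 1$ by definition of the simplex. Hence $\sum_{i=1}^m X_i = 1$ holds with probability one; there is nothing further to prove.

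For part (i), my preferred approach is the \emph{gamma representation} of the Dirichlet distribution. First I would recall (or establish by a routine change of variables) that if $Y_1,\ldots,Y_m$ are independent with $Y_j \sim \mathrm{Gamma}(\alpha_j, 1)$ and $S := \sum_{j=1}^m Y_j$, then $(Y_1/S,\ldots,Y_m/S) \sim \mathrm{Dir}(\alpha_1,\ldots,\alpha_m)$. Granting this, I can write $X_i \stackrel{d}{=} Y_i/(Y_i + Z)$, where $Z := \sum_{j \neq i} Y_j$. Since the $Y_j$ are independent, $Y_i$ and $Z$ are independent, and by additivity of the gamma distribution $Z \sim \mathrm{Gamma}(\sum_{j \neq i}\alpha_j,\, 1)$. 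The final step is the standard beta--gamma identity: if $A \sim \mathrm{Gamma}(a,1)$ and $B \sim \mathrm{Gamma}(b,1)$ are independent, then $A/(A+B) \sim \mathrm{Beta}(a,b)$, which one verifies via the change of variables $(A,B) \mapsto (A/(A+B),\, A+B)$, whose Jacobian factorizes the joint density into a $\mathrm{Beta}(a,b)$ density in the first coordinate times a $\mathrm{Gamma}(a+b,1)$ density in the second. Applying this with $a = \alpha_i$ and $b = \sum_{j \neq i}\alpha_j$ yields $X_i \sim \mathrm{Beta}(\alpha_i,\, \sum_{j \neq i}\alpha_j)$, as claimed.

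Alternatively, one can bypass the gamma machinery and integrate the Dirichlet density directly: fixing $x_i = t$ and integrating $\prod_{j \neq i} x_j^{\alpha_j - 1}$ over the scaled simplex $\{x_j \geq 0 : \sum_{j \neq i} x_j = 1 - t\}$, the substitution $x_j = (1-t)u_j$ pulls out a factor $(1-t)^{(\sum_{j \neq i}\alpha_j) - 1}$, leaving a marginal density proportional to $t^{\alpha_i - 1}(1-t)^{(\sum_{j \neq i}\alpha_j)-1}$, which is exactly the shape of the $\mathrm{Beta}(\alpha_i,\, \sum_{j \neq i}\alpha_j)$ density. The main (and really the only) obstacle in either route is correct bookkeeping of the normalization constants; in the gamma route this is subsumed into the clean beta--gamma identity, while in the direct route one must track the Dirichlet normalizing constant through the change of variables. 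Since the answer is pinned down as a probability density, I would sidestep the constants entirely by identifying the functional form and invoking the fact that a density is determined by its shape.
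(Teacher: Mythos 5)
The paper gives no proof of Proposition~\ref{prop:dirichletproperties} at all---it is stated as folklore with a citation---so there is no in-paper argument to compare against; your proof is correct and supplies the standard justification. Part (ii) is indeed immediate from the support being $\Delta_{m-1}$, and your gamma-representation route for part (i) is the canonical argument (and is in fact the same representation the paper itself invokes in the running-time analysis of Theorem~\ref{thm:nequalsmk}, where the Dirichlet sample is produced as $X_i = Y_i/\sum_l Y_l$ for i.i.d.\ $\mathrm{Gamma}$ variables). Your direct-marginalization alternative also checks out: the substitution $x_j=(1-t)u_j$ over the $(m-2)$-dimensional scaled simplex contributes a Jacobian factor $(1-t)^{m-2}$, which combines with $(1-t)^{\sum_{j\neq i}(\alpha_j-1)}$ to give total exponent $\bigl(\sum_{j\neq i}\alpha_j\bigr)-1$, exactly the $\mathrm{Beta}\bigl(\alpha_i,\sum_{j\neq i}\alpha_j\bigr)$ shape.
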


Proposition~\ref{prop:dirichletproperties} implies that the Dirichlet distribution on $\Delta_{k-1}$ with parameters $\vec{\alpha} = \frac{1}{k-1}\vec{\mathbbm{1}}$ has marginals equal to $\mathrm{Beta}(1/(k-1), 1)$.
This leads us to the following algorithm to sample a symmetric Nash equilibrium strategy for each player in a $k$-player Blotto game where the battlefields can be partitioned into $k$ sets of equal value.

\begin{algorithm}[H]
\SetAlgoLined
\DontPrintSemicolon
\SetKwInOut{Input}{Input}
\SetKwInOut{Output}{Output}
\SetKwFunction{ConstructM}{ConstructM}
\KwIn{a Colonel Blotto game $(k,n,\vec{\cB} = \mathbbm{1},\vec{v})$ and a partition function $\pi : [n] \to [k]$ satisfying $\sum_{\ell \in \pi^{-1}(m)} v_{\ell} = V/k$ for each $m \in [k]$.}

\KwOut{a sample $(A_{1},\ldots,A_{n}) \in \RR^n$ from a mixed equilibrium strategy for a single player.}
\BlankLine

Draw $(X_1,\ldots,X_k) \sim \mathrm{Dir}(1/(k-1),\ldots, 1/(k-1))$.

$A_j \gets \left(\frac{k v_j}{V}\right) \cdot X_{\pi(j)}$ for all $j \in [n]$.

\Return $(A_1,\ldots,A_n)$
 \caption{\texttt{NashEquilThm\ref{thm:nequalsmk}}: Colonel Blotto Nash Equilibrium for Theorem~\ref{thm:nequalsmk}}
 \label{alg:nequalskcolonelBlottonashequilibrium}
\end{algorithm}

\begin{proof}[Proof of Theorem~\ref{thm:nequalsmk}]
\textit{Correctness}: The output of Algorithm~\ref{alg:nequalskcolonelBlottonashequilibrium} meets the conditions of Lemma~\ref{lem:colonelBlottosuffconds} and therefore the players are in Nash equilibrium: \begin{enumerate}
\item[(a)] Budget constraint:
$$\displaystyle\sum_{j=1}^n A_{i,j} = \displaystyle\sum_{j=1}^n \left(\frac{k v_j}{V}\right) X_{i,\pi(j)} = \displaystyle\sum_{m=1}^k X_{i,m} \left(\displaystyle\sum_{l \in \pi^{-1}(m)} \frac{k v_l}{V}\right) = \displaystyle\sum_{m=1}^k X_{i,m}$$ 
which is 1 by Proposition~\ref{prop:dirichletproperties}(ii).
\item[(b)] Marginal constraint: $A_{i,j} \sim \left(\frac{k v_j}{V}\right) \cdot \mathrm{Beta}(1/(k-1), 1)$ by Proposition~\ref{prop:dirichletproperties}(i). 
\end{enumerate}\textit{Running time}: we can sample the Dirichlet variable in $O(n)$ time, using the method of \cite{ahrens1974computer} to sample $n$ i.i.d variables $Y_i \sim \mathrm{Gamma}(1/(k-1),1)$ and letting $X_i = \frac{Y_i}{\sum_{l=1}^n Y_l}$ for all $i \in [n]$.

\end{proof}

\subsection{Couplings for 3 players (Theorem~\ref{thm:kequalsthree})} \label{subsec:thmkequals3}

We prove Theorem~\ref{thm:kequalsthree}, which vastly improves over Theorem~\ref{thm:nequalsmk} (from the previous section) in the $k=3$ case. The proof of this theorem is much more involved, and is inspired by the relationship between the Dirichlet distribution and the $L_p$-norm uniform distribution defined in \cite{cambanis1981theory}.

In particular, \cite{song1997lpnorm} proves that given $(U_1,\ldots,U_m)$ drawn from the $m$-dimensional $L_p$-norm uniform distribution, then $(|U_1|^p,\ldots,|U_m|^p)$ is distributed as $\mathrm{Dir}(1/p,\ldots,1/p)$.  Therefore, for the construction of Theorem~\ref{thm:nequalsmk}, in order to draw $(X_1,\ldots,X_k)$ from the $\mathrm{Dir}(1/(k-1),\ldots,1/(k-1))$ distribution, we could have set $(X_1,\ldots,X_k) = (|U_1|^{k-1},\ldots,|U_k|^{k-1})$ for $(U_1,\ldots,U_k)$ drawn from the $L_{k-1}$-norm uniform distribution. The $k = 3$ case is very special, because the $L_{k-1} = L_2$-norm uniform distribution is the uniform distribution on the unit $L_2$ sphere, and therefore it is rotationally symmetric. We will take advantage of the rotational symmetry of the uniform distribution on the $L_2$ sphere in order to handle a much wider range of battlefield valuations in Theorem~\ref{thm:kequalsthree} when $k = 3$. We summarize this intuition by stating the following remarkable geometric fact:

\begin{proposition}\label{prop:unif2spheredotprod}
Let $U \in \RR^3$ be a point drawn uniformly at random from the surface of the unit sphere $\sum_{l=1}^3 U_l^2 = 1$. Let $c \in \RR^3$. Then the inner product $c \cdot U$ is distributed as \begin{align*}&c \cdot U \sim \mathrm{Unif}[-\|c\|, \|c\|], & \mbox{and so} \qquad\quad
&(c \cdot U)^2 \sim \|c\|^2 \cdot \mathrm{Beta}(1/2, 1).\end{align*}
\end{proposition}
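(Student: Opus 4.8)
The plan is to reduce the statement to a one-dimensional fact about the uniform distribution on the sphere by exploiting rotational invariance, and then to push the resulting law through the squaring map. First I would use the fact that the uniform surface measure on the unit sphere $\{U : \sum_l U_l^2 = 1\}$ is invariant under every orthogonal transformation: if $R \in \R^{3\times 3}$ is a rotation then $RU$ has the same distribution as $U$. Picking $R$ so that $Rc = \|c\| e_3$, I can write $c \cdot U = \|c\|\,(R^\top e_3)\cdot U = \|c\|\, e_3 \cdot (RU)$, and since $RU$ is distributed identically to $U$ this shows $c \cdot U \overset{d}{=} \|c\|\, U_3$. Thus everything reduces to identifying the marginal law of a single coordinate $U_3$, and then rescaling by $\|c\|$.

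Second, I would show that $U_3 \sim \mathrm{Unif}[-1,1]$; this is exactly Archimedes' hat-box theorem. Parametrizing the sphere in spherical coordinates as $U = (\sin\theta\cos\phi,\ \sin\theta\sin\phi,\ \cos\theta)$ with $\theta \in [0,\pi]$ and $\phi \in [0,2\pi)$, the surface area element is $\sin\theta\, d\theta\, d\phi$, so the uniform distribution has density proportional to $\sin\theta$ in $\theta$ and is uniform in $\phi$. Setting $z = \cos\theta = U_3$ gives $dz = -\sin\theta\, d\theta$, so the surface area of the spherical band with $U_3 \in [z, z+dz]$ equals $2\pi \sin\theta\,|d\theta| = 2\pi\, |dz|$, independent of $z$. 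Dividing by the total area $4\pi$ yields a constant density $\tfrac12$ on $[-1,1]$, i.e. $U_3 \sim \mathrm{Unif}[-1,1]$. Combined with the first step, this gives $c \cdot U \sim \mathrm{Unif}[-\|c\|, \|c\|]$, the first claim.

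Finally, for the second claim I would transform $W := c\cdot U \sim \mathrm{Unif}[-\|c\|,\|c\|]$ through the map $w \mapsto w^2$. For $s \in [0,1]$, $\P[W^2/\|c\|^2 \le s] = \P[|W| \le \|c\|\sqrt{s}] = \sqrt{s}$, since $|W|$ is uniform on $[0,\|c\|]$. By the definition of the Beta distribution, $\mathrm{Beta}(1/2,1)$ has CDF $x \mapsto x^{1/2}$ on $[0,1]$, so $W^2/\|c\|^2 \sim \mathrm{Beta}(1/2,1)$ and hence $(c\cdot U)^2 \sim \|c\|^2\cdot\mathrm{Beta}(1/2,1)$, as desired. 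I expect the main (and essentially only) subtlety to be the hat-box computation establishing the uniformity of $U_3$; the rotational-invariance reduction and the final change of variables are routine. One should also handle the degenerate case $c = 0$ separately, where both claimed laws are point masses at $0$.
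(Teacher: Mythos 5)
Your proposal is correct and follows essentially the same route as the paper: reduce to a single coordinate via rotational invariance, use $U_1 \sim \mathrm{Unif}[-1,1]$, and push through the squaring map to read off the $\mathrm{Beta}(1/2,1)$ CDF. The only difference is that you prove the uniform-marginal fact directly via Archimedes' hat-box computation, whereas the paper cites it from the literature; your handling of the degenerate case $c=0$ is a harmless extra.
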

\begin{proof}
By the rotational symmetry of $U$, the inner product $\frac{c}{\|c\|} \cdot U$ is equal in distribution to $U_1$. Since $U_1 \sim \mathrm{Unif}[-1,1]$ (see e.g., Theorem 2.1 of \cite{song1997lpnorm}), $c \cdot U \sim \mathrm{Unif}[-\|c\|, \|c\|]$ follows.
Therefore the CDF of $\frac{(c \cdot U)^2}{\|c\|^2}$ is $\PP[\frac{(c \cdot U)^2}{\|c\|^2} \leq a] = \sqrt{a}$ for any $a \in [0,1]$.
This proves that $\frac{(c \cdot U)^2}{\|c\|^2} \sim \mathrm{Beta}(1/2, 1)$. 
\end{proof}

The analysis of Algorithm~\ref{alg:kequalsthree}, which constructs the equilibrium for Theorem~\ref{thm:kequalsthree}, will depend on this proposition. In short, the algorithm samples a vector $U \in \R^3$ uniformly from the unit sphere $\mathbb{S}_2 \subset \RR^3$. It then maps $U$ into $\R^n$ with a linear isometry described by a matrix $M$. Finally, it outputs the coordinate-wise square of this point. In order to ensure correctness, the algorithm must use an isometry $M$ that has squared row norms proportional to the battlefield valuations. Finding such an $M$ is the core technical challenge, and it is accomplished by the helper algorithm \texttt{ConstructM}, which constructs an $M$ that has the following guarantee (proof deferred):
\begin{claim} \label{claim:constructmcorrectness}
Given values $0 \leq s_1,\ldots,s_n \leq 1$ and $m \in \ZZ$ such that $\sum_{j=1}^n s_j = m$, the method \texttt{ConstructM} returns in $O(nm)$ time a matrix $M \in \RR^{n \times m}$ such that $M^TM = I_m$ and $\|M_{j,*}\|^2 = s_j$, for all $j \in [n]$. (Here $M_{j,*}$ denotes the $j$th row of $M$.)
\end{claim}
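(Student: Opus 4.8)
The plan is to recognize the claim as an algorithmic, efficient instance of the Schur--Horn theorem and to realize the construction through a short sequence of planar (Givens) rotations. First I would reformulate the two requirements geometrically: $M^TM = I_m$ says exactly that the $m$ columns of $M$ are orthonormal, so that $P := MM^T$ is the orthogonal projection of rank $m$ onto their span, whose diagonal entries $P_{jj} = \|M_{j,*}\|^2$ are precisely the prescribed row norms $s_j$. Thus the task is to build---and, crucially, to write down an explicit $M$ realizing---a rank-$m$ orthogonal projection with prescribed diagonal $(s_1,\dots,s_n)$. The hypotheses $0 \le s_j \le 1$ and $\sum_j s_j = m$ are exactly the majorization condition that $(s_1,\dots,s_n)$ is majorized by the spectrum $(\,\underbrace{1,\dots,1}_{m},0,\dots,0\,)$ of such a projection: the sum of the largest $t$ of the $s_j$ is at most $\min(t,m)$, with equality at $t=n$. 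So the Schur--Horn theorem already guarantees existence, and the real content of the claim is the explicit $O(nm)$ construction that \texttt{ConstructM} must deliver.

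For the construction I would start from the canonical matrix $M^{(0)} = \begin{bmatrix} I_m \\ 0 \end{bmatrix} \in \RR^{n\times m}$, which has orthonormal columns and row-norm-squared vector $\rho = (1,\dots,1,0,\dots,0)$ summing to $m$. Rotations applied on the left preserve column-orthonormality: if $G \in \RR^{n\times n}$ is orthogonal then $(GM)^T(GM) = M^TM = I_m$. I would use Givens rotations $G_{p,q}(\theta)$, each mixing only rows $p$ and $q$; such a rotation conserves the combined mass $\rho_p + \rho_q$ of those two rows while transferring squared norm between them, and by choosing $\theta$ it can set one of $\rho_p,\rho_q$ to any value in an interval containing $[\min(\rho_p,\rho_q),\max(\rho_p,\rho_q)]$. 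The algorithm repeatedly selects an ``over-full'' row ($\rho_p > s_p$) and an ``under-full'' row ($\rho_q < s_q$) and rotates them so that one of the two lands exactly on its target value, permanently retiring that row. Because the total squared norm is conserved and equals $\sum_j s_j$, one row can be retired per rotation and at most $n-1$ rotations are needed. Since $M^{(0)}$ is sparse and each Givens rotation touches only two rows of length $m$, each step costs $O(m)$, for $O(nm)$ in total, with lower-order bookkeeping to maintain the current norms $\rho_j$ and select pivots; at the end $\rho = s$ and, since every step left $M^TM = I_m$ intact, the returned $M$ satisfies both required properties.

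The hard part---and the step I expect to be the main obstacle---is the \emph{feasibility} of each rotation: one must show that at every stage there is a pivot pair $(p,q)$ and an angle $\theta$ that simultaneously drives one row exactly to its target and keeps the other row's squared norm in $[0,1]$, i.e.\ that the target always lies in the rotation's reachable interval. I would establish this from a conservation/majorization invariant: among the not-yet-retired rows the average of the current norms $\rho_j$ equals the average of the targets $s_j$, so there is always both an over-full and an under-full unfinished row, and choosing the pivot appropriately (e.g.\ at the sorted extremes, the largest current norm against a suitable target) forces the target of the row being retired to lie between the two current norms---exactly the regime where the rotation can reach it. This is the constructive heart of Schur--Horn (the Chan--Li rotation scheme), and the one point requiring real care is that we apply it directly to the rows of the thin matrix $M$ rather than to a full $n\times n$ eigenvector matrix; doing so is precisely what keeps the cost at $O(nm)$ and lets \texttt{ConstructM} output $M$ itself rather than the projection $P$.
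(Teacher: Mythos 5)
Your proposal is essentially the paper's own argument: \texttt{ConstructM} starts from $(I_m\;\;0)^T$ and greedily applies planar (Givens-type) rotations via \texttt{RotatePair} to one over-full and one under-full row, retiring one row per step, with feasibility guaranteed by exactly the conservation/majorization invariants you describe (the paper's Invariants 1--4, plus a trace argument at termination). The only cosmetic difference is that the paper additionally maintains pairwise orthogonality of the two rows being rotated so that the rotation angle has a closed form, but this does not change the substance of the approach.
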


\begin{algorithm}
\caption{\texttt{NashEquilThm\ref{thm:kequalsthree}}: Colonel Blotto Nash Equilibrium for Theorem~\ref{thm:kequalsthree}}\label{alg:kequalsthree}

\SetAlgoLined
\DontPrintSemicolon
\SetKwInOut{Input}{Input}
\SetKwInOut{Output}{Output}
\SetKwFunction{ConstructM}{ConstructM}

\KwIn{The number of battlefields $n$ and the battlefield valuations $v_1 \ge v_2 \ge \cdots \ge v_n \ge 0$ such that $v_j \leq \frac{1}{3} V$ for all $j \in [n]$.
Recall that $V = \sum_{k=1}^n v_k$.}
\BlankLine
\KwOut{A bid vector $A = (A_1,\ldots,A_n)^T \in \R^n$ for the $n$ battlefields that is sampled from a distribution satisfying Lemma~\ref{lem:colonelBlottosuffconds} for the Blotto game $G = (3, n, \vec{\mathbbm{1}}, \vec{v})$.}
\BlankLine

\SetKwProg{Fn}{Function}{ is}{end}
\Fn{$\texttt{SampleBid}(\vec{v})$}{
Construct $M \in \R^{n \times 3}$ by running: $M \leftarrow \ConstructM \left(\left(\frac{3}{V}\right) \cdot \vec{v}; m = 3\right)$ \label{algstep:kequals3constructm}

Sample $U \in \RR^3$ uniformly at random from the unit $\ell_2$-sphere $\mathbb{S}_2 = \{x \mid \|x\|_2 = 1\}$. \label{algstep:kequals3sampleu}

\Return $A_j \gets (M_{j,*} \cdot U)^2$ for all $j \in [n]$. \label{algstep:kequals3return}
}

\SetKw{KwAnd}{and}

\nonl \hrulefill

\KwIn{Values $0 \le s_1, s_2, \ldots, s_n \le 1$ and $m \in \N$ such that $\sum_{j=1}^n s_j = m$.} 

\BlankLine

\KwOut{$M \in \RR^{n \times m}$ such that $M^TM = I_m$ and $\|M_{j,*}\|^2 = s_j$, for all $j \in [n]$.}

\BlankLine

\Fn{$\texttt{ConstructM}(\vec{s}, m)$}{

Permute the indices of $\vec{s}$ so that $s_{r} \ge s_{r'}$ for each $r \in [m]$ and $r' \in [n] \setminus [m]$. \label{algstep:constructmrowpermute}

Initialize $M \in \RR^{n \times m}$ as $M_{i,i} = 1$ for all $i \in [m]$, and $0$ everywhere else.\;

$j \leftarrow 1$, $l \leftarrow m+1$.\;

\While{$j \leq m$ \KwAnd $l \leq n$}{ \label{algstep:constructminvariantstep}

$w_1, w_2 \leftarrow \texttt{RotatePair}(u_1 = M_{j,*},u_2 = M_{l,*},t_1 = s_j, t_2 = s_l)$.\;

$M_{j,*} \leftarrow w_1$.
$M_{l,*} \leftarrow w_2$.\;

\lIf{$\|M_{j,*}\|^2 = s_j$}{$j \leftarrow j + 1$.}
\lIf{$\|M_{l,*}\|^2 = s_l$}{$l \leftarrow l + 1$.}
}
Undo the row permutation from step \ref{algstep:constructmrowpermute}. \label{algstep:constructmrowpermuteback}

\Return $M$.\;
}

\nonl \hrulefill

\KwIn{Vectors $u_1,u_2 \in \RR^m$, and targets $t_1,t_2 \in \RR$ such that $\|u_1\|^2 \geq t_1 \geq t_2 \geq \|u_2\|^2$ and $u_1 \cdot u_2 = 0$.}

\BlankLine

\KwOut{$w_1,w_2 \in \RR^m$ that are (i) supported on $\mathrm{supp}(u_1) \cup \mathrm{supp}(u_2)$ such that (ii) $W = (\begin{array}{cc}w_1 & w_2\end{array}) \in \RR^{m \times 2}$ and $U = (\begin{array}{cc}u_1 & u_2\end{array}) \in \RR^{m \times 2}$ satisfy $WW^T = UU^T$, (iii) $\|w_1\|^2 \geq t_1 \geq t_2 \geq \|w_2\|^2$ and (iv) there is $k \in [2]$ such that $\|w_k\|^2 = t_k$.}

\BlankLine

\Fn{$\texttt{RotatePair}(u_1,u_2,t_1,t_2)$}{
\lIf{$\|u_1\|^2 = \|u_2\|^2$}
{$a \gets 1$, $b \gets 0$}

\eIf{$\|u_1\|^2 - t_1 \geq t_2 - \|u_2\|^2$}
{$a \leftarrow \sqrt{\frac{\|u_1\|^2 - t_2}{\|u_1\|^2 - \|u_2\|^2}}$, 
 $b \leftarrow \sqrt{1-a^2}.$}
{$a \leftarrow \sqrt{\frac{t_1 - \|u_2\|^2}{\|u_1\|^2 - \|u_2\|^2}}$, 
 $b \leftarrow \sqrt{1-a^2}.$}

\Return $w_1 = a u_1 - b u_2$, $w_2 = b u_1 + a u_2$.\;
}
\end{algorithm}

Assuming Claim~\ref{claim:constructmcorrectness}, we prove the correctness of \texttt{NashEquilThm\ref{thm:kequalsthree}}:

\begin{proof}[Proof of Theorem~\ref{thm:kequalsthree}]
\textit{Correctness}: The inputs to \texttt{ConstructM} in step~\ref{algstep:kequals3constructm} of Algorithm~\ref{alg:kequalsthree} satisfy the prerequisites $0 \leq s_1,\ldots,s_n \leq 1$ and $\sum_{j=1}^n s_j = m = 3$. Therefore, the matrix $M \in \RR^{n \times 3}$ is guaranteed to have the following properties by Claim~\ref{claim:constructmcorrectness}:
$M^T M = I_3$  and $\|M_{j,*}\|^2 = \frac{3 v_j}{V}$ for all $j \in [n]$. So if each player $i \in [3]$ bids $(A_{i,1}, \ldots, A_{i,n})$ by independently running Algorithm~\ref{alg:kequalsthree} with a random sphere point $U_i \in \RR^3$, then the sufficient conditions of Lemma~\ref{lem:colonelBlottosuffconds} are met:
\begin{enumerate}
 
 \item[(a)] Budget constraint: $\sum_{j=1}^n A_{i,j} = \sum_{j=1}^n (M_{j,*} \cdot U_i)^2 = \|MU_i\|^2 = U_i^T M^T M U_i = \|U_i\|^2 = 1$, using $M^TM = I_3$ and the fact that $U_i$ is on the unit sphere.
 
 \item[(b)] Marginal constraint: $A_{i,j} = \left(M_{j,*} \cdot U_i\right)^2 \sim \|M_{j,*}\|^2 \cdot \mathrm{Beta}(1/2, 1) = \frac{3 v_j}{V} \cdot \mathrm{Beta}(1/2, 1)$ by Proposition~\ref{prop:unif2spheredotprod} and $\|M_{j,*}\|^2 = 3v_j / V$.
\end{enumerate}

\textit{Running time}: The call to \texttt{ConstructM} with $m = 3$ in step~\ref{algstep:kequals3constructm} takes $O(n)$ time by Claim~\ref{claim:constructmcorrectness}. Sampling $U \in \RR^3$ in step~\ref{algstep:kequals3sampleu} takes $O(1)$ time, for example using the algorithm of \cite{muller1959note}. And finally step~\ref{algstep:kequals3return} takes $6n$ multiplications and additions. So the total running time is $O(n)$.
 
\end{proof}

\subsection{Proof of Claim~\ref{claim:constructmcorrectness} (\texttt{ConstructM} correctness)}

The algorithm \texttt{ConstructM} greedily updates a matrix $M \in \RR^{n \times m}$ using the helper algorithm \texttt{RotatePair} until the desired properties of $M$ are achieved. $M$ is initialized to the matrix $(I_m, 0)^T$.
Each application of \texttt{RotatePair} applies a linear rotation transformation to a pair of rows from $M$ such that at least one of these rows becomes scaled correctly, while the column-orthogonality of $M$ is maintained. Assuming correctness of the \texttt{RotatePair} subroutine, which is proved in Claim~\ref{claim:rotatepaircorrectness} of the appendix, an invariant argument demonstrates that greedily applying \texttt{RotatePair} works:
\begin{proof}[Proof of Claim~\ref{claim:constructmcorrectness}]\textit{Correctness}:
We analyze the algorithm by proving several invariants on $M,j,l$. These hold at step~\ref{algstep:constructminvariantstep}.

\textit{Invariant 1}: $s_j \geq s_l$.

\textit{Invariant 2}: The columns of $M$ are orthonormal. $M^T M = I_m$.

\textit{Invariant 3}: For all $r \in \{j,\ldots,m\}$ and $r' \in \{l,\ldots,n\}$,
$\|M_{r,*}\|^2 \geq s_r \mbox{ and } \|M_{r',*}\|^2 \leq s_{r'}$.

\textit{Invariant 4}: For all $r \in \{j,\ldots,m\}$ and $r' \in \{l,\ldots,n\}$, $M_{r,*} \cdot M_{r',*} = 0$.


The invariants clearly hold when the algorithm first reaches step \ref{algstep:constructminvariantstep}. We prove that they are maintained on each iteration. 
Let $M,j,l$ be the states of the variables before running an iteration of the while loop, and $M',j',l'$ the states of the variables after. If $M,j,l$ respect the invariants, then the preconditions of \texttt{RotatePair} are met, because $\|M_{j,*}\|^2 \geq s_j \geq s_l \geq \|M_{l,*}\|^2$ by Invariants 1 and 3, and $M_{j,*}\cdot M_{l,*} = 0$ by Invariant 4.

\textit{Invariant 1}: This follows from the preprocessing in step \ref{algstep:constructmrowpermute}, because $j \in [m]$ and $l \in [n] \sm [m]$.

\textit{Invariant 2}: Notice that for all $a,b\in [m]$, \begin{align*}((M')^T M')_{ab} &= \sum_{c \in [n]} M'_{ca} M'_{cb} = (M'_{ja} M'_{jb} + M'_{la} M'_{lb} - M_{ja}M_{jb} - M_{la}M_{lb}) + (M^T M)_{ab}\end{align*}
So since $M^T M = I_m$ by Invariant 2, it suffices to show that $M'_{ja} M'_{jb} + M'_{la} M'_{lb} - M_{ja}M_{jb} - M_{la}M_{lb} = 0$ for all $a,b$. This is precisely the condition that $$\left(\begin{array}{cc} M'_{j,*} & M'_{l,*}\end{array}\right)\left(\begin{array}{cc} M'_{j,*} & M'_{l,*}\end{array}\right)^T = \left(\begin{array}{cc} M_{j,*} & M_{l,*}\end{array}\right)\left(\begin{array}{cc} M_{j,*} & M_{l,*}\end{array}\right)^T,$$ which is guaranteed by item (ii) of \texttt{RotatePair}.

\textit{Invariant 3}: Since $j$ and $l$ are the only rows modified from the previous step, and $j' \geq j$, $l' \geq l$, it suffices to consider rows $j$ and $l$. For these, item (iii) of \texttt{RotatePair} guarantees that $\|M'_{j,*}\|^2 \geq s_j$ and $s_l \geq \|M'_{l,*}\|^2$. 

\textit{Invariant 4}: Item (iv) of \texttt{RotatePair} guarantees that at least one of $j$ and $l$ is incremented on each step. If $j' > j$, then the invariant holds, because the vectors $M'_{r,*}$ for $r \geq j'$ are supported on coordinates $\{r,\ldots,m\} \subset \{j',\ldots,m\}$, while by item (i) of \texttt{RotatePair} the vectors $M'_{r',*}$ for $r' \geq l' \geq l$ are supported on coordinates $\{1,\ldots,j'-1\}$. Otherwise, if $j' = j$ then $l' > l$, and the vectors $M'_{r',*}$ for $r' \geq l'$ are all 0. So in both cases $M'_{r,*} \cdot M'_{r',*} = 0$ for all $r \in \{i',\ldots,m\}$ and $r' \in \{j',\ldots,n\}$.

Therefore Invariants 1 through 4 are maintained by the algorithm. Notice that the row index $j$ (respectively, $l$) is only incremented if $\|M_{j,*}\|^2 = s_j$ (respectively, $\|M_{l,*}\|^2 = s_l$), and after that the row is no longer modified. So if the algorithm ever exits, then $\|M_{r,*}\|^2 = s_r$ for all $r \in \{1,\ldots,j-1\} \cup \{m+1,\ldots,l-1\}$. Now, if the algorithm exits, then $j = m+1$ and/or $l = n+1$. If $j = m+1$, we have by Invariant 2
\begin{align*}
m &= \trace(M^T M) = \sum_{r \in [n]} \|M_{r,*}\|^2 = \sum_{r \in [l-1]} s_r + \sum_{r \in [n] \sm [l-1]} \|M_{r,*}\|^2, & \mbox{since }j = m+1 \\ 
&\leq \sum_{r \in [n]} s_r = m, &\mbox{by Invariant 3}.
\end{align*}

If there were $k \in [n] / [j-1]$ such that $\|M_{k,*}\|^2 < s_k$ then the inequality in the last line would be strict. So we may conclude that $\|M_{k,*}\|^2 = s_k$ for all $k \in [n]$. Combining this with the knowledge that $M^T M = I_m$ by Invariant 2, we have shown that if $i$ ever reaches $n+1$, then the output is correct. Similarly, if $j$ ever reaches $m+1$, we may also argue that the output is correct. So it suffices to prove that the program terminates. This is true because item (iv) of \texttt{RotatePair} guarantees that either $i$ or $j$ is incremented on each step, and so the loop terminates after at most $n$ iterations.

\textit{Running time}: The initialization steps (including the permutation of the rows in steps \ref{algstep:constructmrowpermute} and \ref{algstep:constructmrowpermuteback}) take $O(mn)$ time, and each of the $\leq n$ iterations of the loop takes $O(m)$ time (because \texttt{RotatePair} takes $O(m)$ time). So the algorithm runs in $O(mn)$ total time.

\end{proof}

\section{Boolean-valued Colonel Blotto game}
\label{sec:BooleanBlotto}

We now turn our focus to analyzing the Boolean Blotto game. 
In this game, each player $i$ chooses whether to compete or not compete in up to $\mathcal{B}_i$ battlefields, and the values of battlefields are split evenly among the players who compete in them (or evenly among all players if nobody competes).

\begin{definition}\label{def:Boolean-Blotto}
The {\em Boolean-valued Colonel Blotto game} has the same payoff function as the continuous-valued Colonel Blotto game, with two additional restrictions:\\
(integer budget) each player $i \in [k]$ has an integer-valued budget $\cB_i \in \{0, \ldots, n\}$\\
(Boolean bids) each bid $A_{i,j}$ is either 0 or 1; we say player $i$ {\em competes} in battlefield $j$ if $A_{i,j} = 1$. \\
The game is {\em symmetric} if all players have the same budget $\B$.
\end{definition}

\begin{definition}\label{def:Boolean-multiplayer-lotto}
In the {\em Boolean-valued General Lotto} game, each player $i \in [k]$ plays a vector of probabilities $(p_{i,1},\ldots,p_{i,n})$, such that the budget constraint is met in expectation: $\sum_{j=1}^n p_{i,j} \leq \cB_i$. The payoff function for player $i$ given the bids of all the players is $\EE_A U_i(A),$ where for each $i' \in [k],j' \in [k]$ the bids $A_{i',j'} \sim \Ber(p_{i',j'})$ are drawn independently.
\end{definition}

\begin{lemma}
When there are $k=2$ players, it is a maximin pure strategy for player $i$ to compete only in the $\B_i$ battlefields of highest value.
\end{lemma}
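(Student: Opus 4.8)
The plan is to reduce the maximin computation to a per-battlefield accounting that decouples the two players' choices. Fix player $i$ and let the opponent be player $3-i$, with budget $\cB_{3-i}$. A pure strategy for player $i$ is a set $S \subseteq [n]$ of battlefields to compete in, with $|S| \leq \cB_i$, and likewise a pure opponent strategy is a set $T \subseteq [n]$ with $|T| \leq \cB_{3-i}$. First I would write out player $i$'s payoff battlefield-by-battlefield using the splitting rule of Definition~\ref{def:Boolean-Blotto}: on battlefield $j$, player $i$ earns $v_j$ if $j \in S \setminus T$, earns $v_j/2$ if $j \in S \cap T$ or $j \notin S \cup T$ (a tie, including the tie at bid $0$ when neither competes), and earns $0$ if $j \in T \setminus S$.

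The key step is to observe that these four cases collapse into the affine expression
$$U_i(S,T) = \frac{V}{2} + \frac{1}{2}\sum_{j \in S} v_j - \frac{1}{2}\sum_{j \in T} v_j,$$
which one checks by verifying that the per-battlefield payoff equals $v_j \cdot \tfrac{1 + \mathbbm{1}[j \in S] - \mathbbm{1}[j \in T]}{2}$ in all four cases above. The crucial feature is that the opponent's contribution $-\tfrac{1}{2}\sum_{j \in T} v_j$ is completely independent of $S$: the opponent's only incentive to compete in battlefield $j$ is to shave $v_j/2$ off player $i$'s payoff, and this marginal damage is the same whether or not player $i$ is present on $j$.

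Given this decoupling the maximin is immediate. For any fixed $S$, the worst-case opponent minimizes $U_i(S,T)$ by maximizing $\sum_{j \in T} v_j$ subject to $|T| \leq \cB_{3-i}$, which is achieved by taking $T^\star$ to be the $\cB_{3-i}$ highest-valued battlefields, and this choice does not depend on $S$. Substituting $T^\star$, player $i$'s guaranteed payoff becomes $\tfrac{V}{2} - \tfrac{1}{2}\sum_{j \in T^\star} v_j + \tfrac{1}{2}\sum_{j \in S} v_j$, which player $i$ maximizes over $|S| \leq \cB_i$ by taking $S$ to be the $\cB_i$ highest-valued battlefields (here $v_j \geq 0$ guarantees that spending the whole budget is never harmful, and any tie-breaking among equal values is permissible). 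This is exactly the claimed maximin strategy.

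The only real subtlety, and the step I would be most careful about, is that maximin should in principle range over all opponent strategies, including mixed ones. But since $U_i(S,T)$ is affine in $T$, the expected payoff against any mixed opponent is a convex combination of pure-strategy payoffs, so its minimum is attained at some pure $T$; hence restricting the inner minimization to pure opponent strategies loses nothing. Beyond verifying the four-case identity for the affine formula, the remainder is just two elementary ``choose the top-$b$ items'' optimizations, so I do not anticipate further obstacles.
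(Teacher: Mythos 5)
Your proposal is correct and takes essentially the same approach as the paper: the affine identity $U_i(S,T) = \tfrac{V}{2} + \tfrac{1}{2}\sum_{j\in S} v_j - \tfrac{1}{2}\sum_{j\in T} v_j$ is exactly the formalization of the paper's one-line observation that the marginal gain from competing in battlefield $j$ is $v_j/2$ regardless of the opponent's play. Your write-up just makes the four-case check and the reduction of the inner minimization to pure strategies explicit.
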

\begin{proof}
Regardless of the other player's strategy, the marginal gain from competing in battlefield $j$ is $v_j/2$, so it is optimal to compete in the most valuable battlefields.
\end{proof}

Boolean Blotto only becomes interesting when $k > 2$. We now proceed to characterize the equilibria of symmetric multiplayer Boolean Blotto.

\subsection{Boolean General Lotto and sufficient conditions for Colonel Blotto}
As in our analysis of continuous-valued Colonel Blotto, we first characterize the symmetric equilibria of the General Lotto analogue of the game.

For a given player, Alice, and given battlefield of value $v$, let $u_1(p,v)$ be the expected utility earned by Alice from competing in the battlefield if all the other $k-1$ players independently compete with probability $p$, and let $u_0(p,v)$ be Alice's expected utility from not competing. Let $m_v(p) = u_1(p,v) - u_0(p,v)$ be the marginal utility of competing. We can write Alice's expected utility from competing with probability $q$ as $ q u_1 + (1-q)u_0$.

If Alice doesn't compete in the battlefield, she only gains utility when nobody competes: $u_0(p,v) = \frac{v}{k}(1-p)^{k-1}$. The total utility earned by all players from the battlefield is $v$, so by symmetry, $p \cdot u_1(p,v) + (1-p) \cdot u_0(p,v) = \frac{v}{k}$. Combining these two equations yields
$$m_v(p) = \begin{cases}
\frac{v}{k}(k-1), & p = 0 \\
\frac{v}{k} \cdot \frac{1-(1-p)^{k-1}}{p}, & 0 < p \leq 1 \end{cases}$$

We will show that there is a unique symmetric equilibrium. The probabilities $p_1, \dots, p_k$ of the equilibrium strategy are such that the marginal utilities of competing are essentially the same for all battlefields. This means that if all players including Alice play the equilibrium strategy, then Alice will have no incentive to move $\epsilon$ probability mass from one battlefield to another. To obtain these probabilities it will be useful to define an inverse of $m_v(p)$.

\begin{claim}\label{claim:continuous}
	When $k>2$, $m_v(p)$ is continuous and monotonically decreasing on the interval $p \in [0,1]$. Therefore it maps $[0,1]$ bijectively to $[v/k,(k-1)\cdot v/k]$.
\end{claim}
The proof is in Appendix~\ref{sec:boolproofs}. By Claim~\ref{claim:continuous}, the inverse $m_v^{-1}$ is uniquely defined on $[v/k,(k-1)\cdot v/k]$, and we may extend its domain to $\R$ by letting $m_v^{-1}(x) = 1$ for $x < v/k$ and $m_v^{-1}(x) = 0$ for $x > (k-1)\cdot v/k$. We are now ready to characterize the symmetric General Lotto equilibrium:

\begin{lemma} \label{lem:generallottoboolvalued}
	The following is the unique symmetric Nash equilibrium of the Boolean-valued General Lotto game with $k \geq 3$ players, equal integer-valued budgets $\cB_i = \cB$ and battlefield valuations $v_1 \geq v_2 \geq \dots \geq v_n > 0$: 
	\begin{equation}
	p_j = m_{v_j}^{-1}(x^*) \quad \forall j \in [n]
	\end{equation}
	where $x^* = \inf \left\{x \in \R: \sum_{j=1}^n m_{v_j}^{-1}(x) \leq \cB \right\}$.
\end{lemma}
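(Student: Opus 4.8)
The plan is to reduce the symmetric-equilibrium condition to a fractional knapsack problem, verify that the proposed strategy satisfies that knapsack's optimality (complementary-slackness) conditions in order to show it is an equilibrium, and then run the necessity of those same conditions in reverse to prove uniqueness.

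First I would fix a symmetric profile in which every player plays $(p_1,\ldots,p_n)$ and write down Alice's best-response problem. If Alice competes in battlefield $j$ with probability $q_j$ while the other $k-1$ players each compete there with probability $p_j$, her expected utility from that battlefield is $q_j u_1(p_j,v_j) + (1-q_j)u_0(p_j,v_j) = u_0(p_j,v_j) + q_j m_{v_j}(p_j)$. Summing over $j$, the $u_0$ terms do not depend on $q$, so Alice's problem is exactly the fractional knapsack $\max_q \sum_j q_j m_{v_j}(p_j)$ subject to $\sum_j q_j \le \cB$ and $q_j \in [0,1]$. Hence $p$ is a symmetric equilibrium if and only if $q = p$ solves this knapsack with item values $c_j := m_{v_j}(p_j)$ — a fixed-point condition, since the values themselves depend on $p$.

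Second, for correctness I would verify that $p_j = m_{v_j}^{-1}(x^*)$ makes $q = p$ knapsack-optimal with dual threshold $\lambda = x^*$. Using Claim~\ref{claim:continuous} and the extended definition of $m_{v_j}^{-1}$, I would check three cases: $p_j = 1$ forces $x^* \le v_j/k = m_{v_j}(p_j)$; $p_j = 0$ forces $x^* \ge (k-1)v_j/k = m_{v_j}(p_j)$; and $p_j \in (0,1)$ forces $m_{v_j}(p_j) = x^*$. These are precisely the conditions $c_j > x^* \Rightarrow p_j = 1$, $c_j < x^* \Rightarrow p_j = 0$, and $c_j = x^*$ otherwise, and a one-line exchange argument then shows any feasible $q$ meeting them is knapsack-optimal, so $p$ is an equilibrium. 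I would also observe that each $m_{v_j}^{-1}$ is continuous and non-increasing (matching values at the boundaries $v_j/k$ and $(k-1)v_j/k$), so $g(x) := \sum_j m_{v_j}^{-1}(x)$ is continuous and non-increasing with $g(x) \to n$ as $x \to -\infty$ and $g(x) \to 0$ as $x \to +\infty$; this makes $x^*$ well-defined and shows the budget is tight, $\sum_j p_j = \cB$.

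Third, and this is where I expect the real work, I would prove uniqueness. Given an arbitrary symmetric equilibrium $p$, shifting probability mass from any battlefield with $p_b > 0$ to any battlefield with $p_a < 1$ preserves feasibility and cannot increase utility, which yields $\max_{a : p_a < 1} m_{v_a}(p_a) \le \min_{b : p_b > 0} m_{v_b}(p_b)$; thus some threshold $\lambda$ separates the "full" and "empty" battlefields, and inverting $m_{v_j}$ case by case gives $p_j = m_{v_j}^{-1}(\lambda)$ for every $j$. The remaining subtlety is that $\lambda$ need not equal $x^*$: budget-tightness gives $g(\lambda) = \cB = g(x^*)$ with $\lambda \ge x^*$, and since $g$ is a sum of non-increasing functions that is constant on $[x^*,\lambda]$, each summand $m_{v_j}^{-1}$ must itself be constant there, forcing $m_{v_j}^{-1}(\lambda) = m_{v_j}^{-1}(x^*)$. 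Therefore $p_j = m_{v_j}^{-1}(x^*)$ and the symmetric equilibrium is unique. Finally I would dispatch the degenerate budgets $\cB = 0$ and $\cB = n$ (where $x^* = \pm\infty$ and all $p_j$ equal $0$ or $1$) as easy special cases.
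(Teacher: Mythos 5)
Your proof is correct and takes essentially the same route as the paper's: the knapsack complementary-slackness verification is precisely the paper's deviation-gain bound $\sum_j (q_j-p_j)\,m_{v_j}(p_j) \le x^*\sum_j(q_j-p_j)\le 0$, and your threshold/exchange argument for uniqueness is the contrapositive form of the paper's case analysis showing a profitable mass shift whenever no common $x'$ exists. Your final step---using that $g$ is non-increasing and constant on $[x^*,\lambda]$ to force $m_{v_j}^{-1}(\lambda)=m_{v_j}^{-1}(x^*)$ for each $j$---is a welcome sharpening of the paper's terser assertion that the threshold ``must be $x^*$.''
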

\begin{proof}
	If $\cB = n$, then $x^* = -\infty$ so every $p_j = 1$; this is clearly the unique equilibrium. So, we assume that $\cB < n$ henceforth. Note that we chose $x^*$ such that the players meet their Lotto budget constraint exactly.
	
	Now suppose Alice deviates from the strategy by playing $\{q_j\}_{j \in n}$ while all other players play $\{p_j\}_{j \in n}$. The utility she gains by deviating is
	\begin{align}
	\sum_{j=1}^n (q_j - p_j)\cdot m_{v_j}(p_j) &= \sum_{j=1}^n (q_j - p_j)\cdot  m_{v_j}(m_{v_j}^{-1}(x^*)) \\
	&\leq \sum_{j=1}^n (q_j - p_j) x^* = x^*\left( \sum_{j=1}^n q_j -  \sum_{j=1}^n p_j \right) \label{line:invert}\\
	&= x^*(\cB - \cB) = 0
	\end{align}
	where the inequality in line~\eqref{line:invert} arises because $x^*$ may lie in the extended domain of $m_{v_j}^{-1}$ for some $j$s. 
	Thus, Alice has no incentive to deviate, so this is an equilibrium.
	
	Now we prove uniqueness. 
	Let $\{\pi_j\}_{j \in n}$ be any symmetric equilibrium. 
	We will show that there must exist an $x'$ such that $\pi_j = m_{v_j}^{-1}(x')$ for each $j$. 
	Assume for contradiction that 
	
    \begin{equation}\mbox{There is no }x'\mbox{ such that }\pi_j = m_{v_j}^{-1}(x')\mbox{ for each }j\mbox{.}\tag{$\ast$}\end{equation}
	
    We will show in cases that there are battlefields $i$ and $\ell$ such that: 
    (a) the marginal utility $m_{v_i}(\pi_i) < m_{v_\ell}(\pi_\ell)$, and 
    (b) probability $\pi_i > 0$ and probability $\pi_\ell < 1$.
    Thus, a player Alice will increase her utility by shifting $\eps$ probability mass from battlefield $i$ to $\ell$.

	\paragraph{Case 1} 
	Suppose $\pi_i \in (0,1)$ for some $i$. 
	Let $x' = m_{v_i}(\pi_i)$. 
	Assumption ($\ast$) implies that there is some $\ell \in [n]$ such that $\pi_\ell \ne m_{v_\ell}^{-1}(x')$.
	Three subcases ensue:
	(a) if $\pi_\ell = 0$, then $0 < m_{v_\ell}^{-1}(x')$, so applying the monotonically {\em decreasing} function $m_{v_\ell}$ to both sides of the inequality yields $m_{v_\ell}(\pi_\ell) = m_{v_\ell}(0) > x' = m_{v_i}(\pi_i)$.
	(b) if $\pi_\ell = 1$, then $1 > m_{v_\ell}^{-1}(x')$, so applying the monotonically {\em decreasing} function $m_{v_\ell}$ to both sides of the inequality yields $m_{v_\ell}(\pi_\ell) = m_{v_\ell}(1) < x' = m_{v_i}(\pi_i)$.
	(c) if $\pi_\ell \in (0,1)$, then either $m_{v_\ell}(\pi_\ell) > m_{v_i}(\pi_i)$ or $m_{v_\ell}(\pi_\ell) < m_{v_i}(\pi_i)$.

	\paragraph{Case 2} 
	Suppose $\pi_j \in \{0,1\}$ for all $j \in [n]$, yet there is no $x'$ such that $x' = m_{v_j}(\pi_j)$ for all $j \in [n]$.
	Then by Assumption~$(\ast)$ there must be indices $i, \ell \in [n]$ such that $\pi_i = 1$ and $\pi_\ell =0$ and $v_i < (k-1)v_\ell$ so $m_{v_i}(\pi_i) = m_{v_i}(1) = \frac{v_i}{k} < \frac{k-1}{k}v_\ell = m_{v_\ell}(0) = m_{v_\ell}(\pi_\ell)$.

	In all cases, moving $\eps$ probability mass from the battlefield with the smaller marginal utility to the larger (between $i$ and $j$) strictly increases Alice's utility and shows the $(\pi_1,\ldots,\pi_n)$ is not an equilibrium. 
	This contradicts Assumption~$(\ast)$, so there is an $x'$ such that $\pi_j = m_{v_j}^{-1}(x')$ for each $j$. It is an immediate consequence of the tightness of the budget constraint that it must be $x^*$, thereby completing the proof.
\end{proof}

\begin{remark}[Limit of large $k$] \label{rem:limitlargekbool}
Let us study the asymptotic behavior of the solution as the number of players $k$ tends to infinity and the average utility per player stays constant (so we increase the values $v_j$ proportionally with $k$). Notice that $m_{k\cdot v_j}(p)$ tends towards $v_j/p$ for each $j$, so the inverse $m_{k \cdot v_j}^{-1}(x)$ tends towards $\min(1,v_j/x)$ for $x > 0$.
Therefore, in the limit of large $k$, the equilibrium strategy tends towards surely competing in some of the top-valued battlefields and competing in the rest with probabilities proportional to the values of those battlefields.
Quantitatively, Lemma~\ref{lem:generallottoboolvalued} prescribes this strategy: iteratively assign portions of the budget to battlefields $1,\ldots,n$ in order of decreasing value as follows. Write $\cB^{(l)}$ to denote the budget remaining after assigning budget to battlefields $1,\ldots,l$, and let $\cB^{(0)} = \cB$. Then battlefield $l$ is assigned budget $\min(1,\cB^{(l-1)} \frac{v_l}{\sum_{j=l}^n v_j})$. So roughly speaking we assign to each battlefield a fraction of the budget equal to the fraction of the total value that the battlefield represents.
\end{remark}

\begin{remark}[Qualitative change in strategy as $k$ increases] \label{rem:increaseingkstrategychangebool}
We also qualitatively observe that as the number of players increases, the players are more likely to bid on battlefields of low value.

As an example, consider two battlefields with values given by $0 < v_2 < v_1 = 1$ and $k$ players with budget given by $\cB = 1$. Then (i) if $k\leq 1/v_2 + 1$,  we will have $p_1 = 1$ and $p_2 = 0$, meaning that if there are not enough players then no one will compete in the battlefield with small value. On the other hand, (ii) if $k > 1/v_2 + 1$, then we prove in Appendix~\ref{sec:boolproofs} that $p_2 > 0$, meaning that if there are enough players then they will compete in the low-value battlefield with some non-zero probability.
\end{remark}

As in the continuous-valued case, the General Lotto solutions in Lemma~\ref{lem:generallottoboolvalued} yield sufficient conditions for the players to be in Nash equilibrium:
\begin{lemma}\label{lem:suffcondsboolvalued}
Let $k \geq 3$. Suppose that in the symmetric Boolean-valued $k$-player Colonel Blotto game with battlefield valuations $v_1 \geq v_2 \geq \dots \geq v_n > 0$ and equal integer-valued budgets $\cB_i = \cB$, each player $i \in [k]$ independently bids a vector $A_{i,*} = (A_{i,1},\ldots,A_{i,n})$ such that for each $i \in [k]$:
\begin{enumerate}
    \item[(a)] $\sum_{j=1}^n A_{i,j} \leq \mathcal{B}$.
    \item[(b)] $A_{i,j} \sim \Ber(p_j)$, where $p_j$ is given in the statement of Lemma~\ref{lem:generallottoboolvalued}, using budget $\mathcal{B}$.
\end{enumerate}
Then the players are in equilibrium. Furthermore, this is the unique symmetric equilibrium.
\end{lemma}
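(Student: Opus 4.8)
The plan is to mirror the strategy used for the continuous case in Lemma~\ref{lem:colonelBlottosuffconds}, reducing the Boolean Colonel Blotto game to the Boolean General Lotto game of Lemma~\ref{lem:generallottoboolvalued}. The entire argument rests on one observation: because the $k-1$ opponents of any fixed player bid \emph{independently} of one another and of that player, the distribution of how many opponents compete on battlefield $j$ depends only on the opponents' marginal probabilities of bidding on $j$, never on how each opponent couples its bids across battlefields. Consequently, for a fixed player $i$ playing a (possibly correlated) bid vector with marginals $q_j = \PP[A_{i,j}=1]$ against opponents with common marginals $p_j$, linearity of expectation over battlefields gives $\E[U_i(A)] = \sum_{j=1}^n \bigl(q_j\, u_1(p_j,v_j) + (1-q_j)\, u_0(p_j,v_j)\bigr)$, which is exactly player $i$'s payoff in the Boolean General Lotto game when it plays marginals $(q_j)$ against $(p_j)$. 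Thus a player's expected Blotto payoff is determined entirely by the marginals of all players' strategies.

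For the equilibrium (sufficiency) direction, I would argue as in Lemma~\ref{lem:colonelBlottosuffconds}. Condition (a) is precisely the Boolean Blotto budget constraint, so the prescribed strategy is feasible, and by the reduction above it earns each player the General Lotto equilibrium payoff. Now suppose player $i$ deviates to any feasible Boolean Blotto strategy. Taking expectations of condition (a) shows its marginals $(q_j)$ satisfy $\sum_j q_j \le \mathcal{B}$, so they constitute a feasible Boolean General Lotto strategy; moreover the deviation's Blotto payoff equals the corresponding Lotto payoff. Since $(p_j)$ is a General Lotto equilibrium by Lemma~\ref{lem:generallottoboolvalued}, this deviation cannot strictly improve player $i$'s payoff. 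Hence no profitable deviation exists and the players are in Colonel Blotto equilibrium.

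For uniqueness, let $(q_j)_{j\in[n]}$ be the marginals of an arbitrary symmetric Boolean Blotto equilibrium. Averaging the almost-sure budget constraint gives $\sum_j q_j \le \mathcal{B}$, so $(q_j)$ is a feasible General Lotto strategy. I claim $(q_j)$ must itself be a symmetric General Lotto equilibrium. Suppose not; then there is a feasible Lotto deviation $(q'_j)$, with $\sum_j q'_j \le \mathcal{B}$, that strictly improves a player's Lotto payoff against $(q_j)$. The key step is to realize $(q'_j)$ as a genuine Boolean Blotto strategy: since $\mathcal{B}$ is an integer and $\sum_j q'_j \le \mathcal{B}$, the greedy coupling of Bernoulli variables under a hard budget constraint (the construction underlying Theorem~\ref{thm:BooleanBlotto}) produces a joint distribution on $\{0,1\}^n$ with marginals $\Ber(q'_j)$ and $\sum_j A_j \le \mathcal{B}$ almost surely. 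By the reduction, this Blotto deviation earns exactly the (strictly larger) Lotto payoff, contradicting that $(q_j)$ is a Blotto equilibrium. Therefore $(q_j)$ is a symmetric General Lotto equilibrium, and since Lemma~\ref{lem:generallottoboolvalued} asserts this equilibrium is unique, $(q_j) = (p_j)$. Uniqueness is understood at the level of marginals; the coupling realizing them need not be unique, and by the reduction all couplings with these marginals are strategically equivalent.

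The one genuinely nontrivial point — and the step I expect to be the main obstacle — is the realizability used in the uniqueness argument: translating an abstract General Lotto deviation back into an honest Blotto strategy that respects the strict budget constraint almost surely. This is exactly where a coupling of Bernoulli variables under a hard integer budget is needed, so the clean statement of that construction (for arbitrary feasible marginals, not just the equilibrium ones) must be in hand; since it concerns only Bernoullis, no circularity with Theorem~\ref{thm:BooleanBlotto} arises. A tempting self-contained alternative is to perturb the equilibrium strategy by a budget-preserving ``swap'' that moves $\eps$ mass from a battlefield $i$ with $A_i=1$ to a battlefield $\ell$ with $A_\ell=0$ whenever $m_{v_i}(q_i) < m_{v_\ell}(q_\ell)$; the difficulty is that such a swap only shifts marginal mass when $\PP[A_i=1,\,A_\ell=0]>0$, which the correlations of a general coupling may force to zero even though $q_i>0$ and $q_\ell<1$. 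Routing the deviation through the coupling construction sidesteps this co-occurrence subtlety entirely, which is why I would adopt it. The remaining ingredients — the case analysis locating a pair of battlefields with unequal marginal utilities, and the continuity and monotonicity of $m_v$ — are supplied by the uniqueness argument of Lemma~\ref{lem:generallottoboolvalued} and by Claim~\ref{claim:continuous}.
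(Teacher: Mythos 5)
Your proposal is correct, and for the equilibrium claim it follows exactly the paper's route: the paper's proof (Appendix~\ref{sec:boolproofs}) consists precisely of your first two paragraphs compressed into three sentences --- condition (a) gives feasibility, linearity of expectation shows the Blotto payoff depends only on marginals, and any Blotto deviation has feasible Lotto marginals, so Lemma~\ref{lem:generallottoboolvalued} rules out improvement. Where you genuinely diverge is uniqueness: the paper asserts ``this is the unique symmetric equilibrium'' but its proof never argues it, whereas you supply the missing step. Your argument is sound: the only nontrivial direction is that a profitable Lotto deviation $(q'_j)$ with $\sum_j q'_j \leq \cB$ can be realized as an honest Blotto strategy, and the interval-rounding coupling of Algorithm~\ref{alg:BooleanvaluedcolonelBlotto} does this for arbitrary feasible marginals (when $\sum_j q'_j < \cB$ the lattice $\{\beta+m\}$ meets $[0,\sum_j q'_j)$ at most $\lceil \sum_j q'_j\rceil \leq \cB$ times, so the hard constraint still holds almost surely), with no circularity since that coupling is a standalone fact about Bernoullis. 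You are also right to flag that the naive $\eps$-mass swap fails for correlated couplings because $\PP[A_i=1, A_\ell=0]$ can vanish, and right that uniqueness can only be meant at the level of marginals --- any coupling of the $\Ber(p_j)$'s respecting the budget is an equilibrium, so the paper's phrasing is implicitly marginal-level. In short, your write-up is a strictly more complete proof of the stated lemma than the one in the paper.
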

The proof (in Appendix~\ref{sec:boolproofs}) is by linearity of expectation, as in the real-valued setting.

\subsection{Colonel Blotto equilibria}

We now show how to obtain an efficient Colonel Blotto strategy from the equilibrium General Lotto strategy. This consists of two tasks: (1) efficiently estimating the implicitly described $p_j$'s, and (2) efficiently computing a coupling of allocations that has the approximate $p_j$'s as its marginals. 
The first task, estimation, can be performed with a carefully tuned binary search. The second task presents an appealing puzzle: given $n$ Bernoulli random variables with biases $p_i,\ldots,p_n$ satisfying $\sum_{i=1}^n p_i = \cB \in \ZZ_{\geq 0}$, how can they be coupled into a joint distribution such that draws $x_1,\ldots,x_n \in \{0,1\}$ from the distribution satisfy $\sum_{i=1}^n x_i = \cB$ almost surely? Algorithm~\ref{alg:BooleanvaluedcolonelBlotto} is a very simple procedure for solving this puzzle.

\begin{algorithm}[H]
\SetAlgoLined
\DontPrintSemicolon
\SetKwInOut{Input}{Input}
\SetKwInOut{Output}{Output}
\SetKwFunction{ConstructM}{ConstructM}
\KwIn{A symmetric Boolean Blotto game $(k,n,\cB,\vec{v})$ with battlefield valuations $v_1 \geq v_2 \geq \dots \geq v_n > 0$.}
\BlankLine
\KwOut{A sample $(A_{1},\ldots,A_{n}) \in \RR^n$ from the equilibrium mixed strategy for a single player in the Boolean-valued Blotto game $(k,n,\cB, \vec{v})$.}
\BlankLine

\SetKwProg{Fn}{Function}{ is}{end}

For each $j \in [n]$, let $p_j$ be as defined in  Lemma~\ref{lem:generallottoboolvalued} or Theorem~\ref{thm:BooleanBlotto}.

For each $j \in [n]$, let $\alpha_j = \sum_{j'=1}^{j-1}p_{j'}$

Draw $\beta \sim \text{Unif}[0,1]$

For each $j \in [n]$, let $A_j = \one[ \exists m \in \mathbb{Z} \mid \beta + m \in [\alpha_j, \alpha_j + p_j)]$

\Return $(A_1,\ldots,A_n)$

 \caption{\texttt{NashEquilThm\ref{thm:BooleanBlotto}}: Boolean-valued Blotto equilibrium for Theorem~\ref{thm:BooleanBlotto}}
 \label{alg:BooleanvaluedcolonelBlotto}
 
\end{algorithm}

\begin{theorem}\label{thm:BooleanBlotto}
Suppose that in the Boolean-valued Colonel Blotto game with equal budgets $\cB_i = \cB$ and $k>2$ players, each player $i \in [k]$ independently runs Algorithm~\ref{alg:BooleanvaluedcolonelBlotto}. Then all of the players will be in Nash equilibrium.

Moreover, given parameter $\epsilon > 0$, Algorithm~\ref{alg:BooleanvaluedcolonelBlotto} runs in time polynomial in the problem size and $\log(1/\epsilon)$, and produces an $\epsilon$-approximate Nash equilibrium.
\end{theorem}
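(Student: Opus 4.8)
The plan is to prove the two assertions separately: first that, with the exact probabilities $p_j$ from Lemma~\ref{lem:generallottoboolvalued}, Algorithm~\ref{alg:BooleanvaluedcolonelBlotto} samples from a genuine symmetric Blotto equilibrium; and second that, using binary-search estimates $\hat p_j$, it runs in time $\mathrm{poly}(n,k,\log(1/\epsilon))$ and produces an $\epsilon$-approximate equilibrium. For the first assertion it suffices, by Lemma~\ref{lem:suffcondsboolvalued}, to verify that the output $(A_1,\dots,A_n)$ satisfies (a) $\sum_j A_j \le \cB$ almost surely and (b) $A_j \sim \Ber(p_j)$. I would analyze the ``wrap-around'' coupling geometrically: the partial sums $\alpha_j = \sum_{j'<j} p_{j'}$ cut $[0,\infty)$ into consecutive intervals $I_j = [\alpha_j, \alpha_j + p_j)$ of length $p_j$ that partition $[0,\sum_j p_j) = [0,\cB)$, and $A_j$ is exactly the indicator that the arithmetic progression $\{\beta + m : m \in \ZZ\}$ meets $I_j$.

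For (b), since $\beta \sim \mathrm{Unif}[0,1]$ and $|I_j| = p_j \le 1$, the set $\{\beta \in [0,1) : \exists m,\ \beta+m \in I_j\}$ has measure exactly $p_j$, so $A_j \sim \Ber(p_j)$. For (a), exactly the $\cB$ integers $m \in \{0,\dots,\cB-1\}$ place $\beta+m$ in $[0,\cB)$ (using $\sum_j p_j = \cB \in \ZZ$ and $\beta \in [0,1)$); each such point lies in a unique interval $I_j$, and since every interval has length at most $1$ no interval can contain two of them. Hence $\sum_j A_j$ equals the number of points, namely $\cB$, almost surely, completing the first assertion.

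For the efficient version I would replace the $p_j$, which are only implicitly defined through $x^\ast = \inf\{x : \sum_j m_{v_j}^{-1}(x) \le \cB\}$, by estimates $\hat p_j = m_{v_j}^{-1}(\hat x)$ from two nested binary searches: an outer search for $\hat x \approx x^\ast$ and, for each $j$, an inner search inverting the explicit monotone map $m_{v_j}(p) = \tfrac{v_j}{k}\tfrac{1-(1-p)^{k-1}}{p}$ (clamped to $0$ or $1$ outside its range $[v_j/k,\,(k-1)v_j/k]$). Running the outer search so that $\hat x$ sits slightly above $x^\ast$ guarantees feasibility $\sum_j \hat p_j \le \cB$ by monotonicity; the coupling analysis above, now with possibly non-integer total length $S = \sum_j \hat p_j \le \cB$, still gives $\sum_j A_j \in \{\lfloor S\rfloor, \lceil S\rceil\} \le \cB$ and $A_j \sim \Ber(\hat p_j)$. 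To bound the deviation gain, note that utilities depend only on marginals (as in Lemma~\ref{lem:suffcondsboolvalued}) and that any $q \in [0,1]^n$ with $\sum_j q_j \le \cB$ is itself realizable by the same coupling, so the gain equals the value of $\max\{\sum_j (q_j - \hat p_j)\, m_{v_j}(\hat p_j) : q \in [0,1]^n,\ \sum_j q_j \le \cB\}$. Writing $c_j = m_{v_j}(\hat p_j) = \hat x + \delta_j$ with $\hat x \ge 0$, this is at most $\hat x(\cB - \sum_j \hat p_j) + \sum_j (q_j - \hat p_j)\delta_j$; for battlefields clamped at $\hat p_j = 1$ we have $c_j = v_j/k \ge \hat x$ and $q_j - \hat p_j \le 0$, and at $\hat p_j = 0$ we have $c_j = (k-1)v_j/k \le \hat x$ and $q_j - \hat p_j \ge 0$, so all boundary terms are non-positive, while each interior term obeys $|\delta_j| \le \eta$ for the inner tolerance $\eta$. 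Thus the gain is at most $V(\cB - \sum_j \hat p_j) + n\eta$, and choosing tolerances so that $\cB - \sum_j \hat p_j \le \epsilon/(2V)$ and $\eta = \epsilon/(2n)$ makes it at most $\epsilon$.

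The main obstacle is this last estimation step: ensuring $\hat x$ can be pushed just above $x^\ast$ yet close enough that the budget slack is at most $\epsilon/(2V)$, all within $\mathrm{poly}(\cdot,\log(1/\epsilon))$ iterations. I would supply the needed quantitative continuity: for $k \ge 3$ the derivative $m_v'(p) = \tfrac{v}{k} g'(p)$, with $g(p) = \tfrac{1-(1-p)^{k-1}}{p}$, is continuous and non-vanishing on all of $[0,1]$ (in particular $g'(0) = -\binom{k-1}{2}$ and $g'(1) = -1$), so $m_{v_j}$ is bi-Lipschitz on $[0,1]$ with constants of size $\mathrm{poly}(k)$ and $1/v_j$. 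Hence $f(x) = \sum_j m_{v_j}^{-1}(x)$ is Lipschitz, and $O(\log(\mathrm{poly}(k)\cdot V^2/\epsilon))$ outer iterations convert a small threshold error into the required slack bound, while each inner inversion needs $O(\log(\mathrm{poly}(k,1/v_j)\cdot n/\epsilon))$ iterations. Since each iteration evaluates $m_{v_j}$ (computing $(1-p)^{k-1}$ by repeated squaring) in polynomial time, the total running time is $\mathrm{poly}(n,k,\log(1/\epsilon))$, establishing the theorem.
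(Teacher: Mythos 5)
Your proof is correct. The first half---the wrap-around interval coupling establishing $A_j \sim \Ber(p_j)$ and $\sum_j A_j = \cB$ almost surely, then invoking Lemma~\ref{lem:suffcondsboolvalued}---is essentially identical to the paper's argument, as is the overall machinery for the algorithmic part (an outer binary search for $x^*$ and inner inversions of the monotone maps $m_{v_j}$, with Lipschitz estimates controlling the precision). Where you genuinely diverge is in how the approximate probabilities are converted into an $\epsilon$-equilibrium. The paper rescales the estimates $\tilde p_j$ so that they sum to exactly $\cB$ (keeping the coupling's integer-sum argument intact) and then argues that a strategy within $\epsilon/(kV)$ of the exact equilibrium in total variation is an $\epsilon$-approximate equilibrium. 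You instead deliberately overshoot $x^*$ so that $S=\sum_j \hat p_j \le \cB$, observe that the same coupling with non-integer total length still yields a feasible Blotto strategy with $\sum_j A_j \in \{\lfloor S\rfloor,\lceil S\rceil\}$, and then bound the best-response gain directly by the first-order expression $\max_q \sum_j (q_j-\hat p_j)m_{v_j}(\hat p_j)$, reusing the sign analysis at the clamped battlefields from the proof of Lemma~\ref{lem:generallottoboolvalued}. Your route avoids both the rescaling step and the total-variation argument and gives an explicit, self-contained bound on the deviation gain in terms of the budget slack $\cB - S$ and the inner tolerance; the paper's route keeps the sampled strategy an exact coupling of an exactly budget-tight marginal vector and quantifies closeness to the true equilibrium instead. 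Two small points to tighten: your ``$|\delta_j|\le\eta$'' for interior battlefields conflates tolerance in the $p$-domain with tolerance in the $x$-domain (the conversion factor is the Lipschitz constant $O(v_j k)$ of $m_{v_j}$, which your bi-Lipschitz remark covers but should be wired in explicitly), and guaranteeing $\hat x \ge x^*$ despite only approximate evaluations of $B$ requires a final upward shift of $\hat x$ by the known accuracy (the paper's Appendix handles the analogous issue with its Case A/Case B analysis).
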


\begin{proof}

We verify that the sufficient conditions for a Nash equilibrium from Lemma~\ref{lem:suffcondsboolvalued} are met. We can assume without loss of generality that $\cB \leq n$, because otherwise all players compete in all battlefields, which is a Nash equilibrium. So in this case $\sum_{j=1}^n p_j = \cB \in \ZZ_{\geq 0}$.

An equivalent way of applying the sampling procedure is to set $$A_j = \one[\{\beta+m\}_{m \in \mathbb{Z}} \cap [\alpha_j, \alpha_j + p_j) \neq \emptyset].$$ Note that the intervals $[\alpha_j, \alpha_j+p_j)$ constitute a partition of the interval $[0,\cB)$, and that $\{\beta+m\}_{m \in \mathbb{Z}}$ intersects this long interval $\cB$ times, and finally that $\{\beta+m\}_{m \in \mathbb{Z}}$ intersects each interval $[\alpha_j, \alpha_j+p_j)$ at most once. It follows that, for any $\beta$, exactly $\cB$ of the $A_j$ bids are set to 1. This proves that the budget constraint holds almost surely.
And the probability that $A_j$ is set to 1 is $p_j$ because the interval $[\alpha_j, \alpha_i + p_j)$ has length $p_j$. So all the sufficient conditions of Lemma~\ref{lem:suffcondsboolvalued} are met.

\paragraph{Efficient approximation of equilibrium} We have constructed an exact Nash equilibrium. However, our algorithm is not yet efficient, because we have not yet described how to compute the probabilities $p_j$. These are defined implicitly in the statement of Lemma~\ref{lem:generallottoboolvalued}, but there appears to be no closed form. Nevertheless, if we could approximately compute the $p_j$ probabilities, then we could approximate the Nash equilibrium. Indeed, the utility for a player can range from $0$ to $V$ and there are $k$ players, so in order to compute an $\epsilon$-Nash equilibrium it suffices to approximate the equilibrium strategy for each player up to $(\epsilon/kV)$ error in statistical total variation. Since there are $n$ probabilities $p_j$, this can be achieved by approximating each $p_j$ up to additive error $(\epsilon/kVn)$. We want this estimation error even after scaling the approximate $p_j$'s so their sum is $\cB$; for this it suffices to achieve additive error $(\epsilon/kVn^2)$. We explain how to do with this with a carefully tuned binary search in a total number of $\poly(n,\log k, \log (V/\eps), \log(V/v_n))$ operations in Appendix~\ref{sec:boolproofs}.

\end{proof}

\section{Remarks \& Open Problems}

In this paper, we extended the definition of the Colonel Blotto problem to the multiplayer setting, and also introduced the study of the Boolean version of the problem.
We solved for the unique symmetric Lotto equilibria and coupled the marginals to construct Blotto equilibria in the symmetric case of these games under various parameter regimes of number of players, number of battlefields, and values of battlefields.
In all cases, we characterized the symmetric equilibria of the  General Lotto version of the game  and coupled the resulting bid distributions into a constrained joint distribution to solve the Blotto version.
A highlight of our paper is the efficient sampling algorithm for the symmetric three player case of continuous Blotto---Algorithm~\ref{alg:kequalsthree}---which is built from the geometric intuition of rotating a 2-sphere about the origin in hyperspace.
Interestingly, this result proves the existence of a coupling satisfying the sufficient constraints of Lemma~\ref{lem:colonelBlottosuffconds} by directly giving an algorithm to sample such a coupled distribution.
It is an open question whether the existence of the coupling can be proved in a more direct way.
This leads to our most general open question of characterizing when marginal distributions $\cD_1,\ldots,\cD_n$ over $\RR$ can be coupled into a joint distribution $\cD$ over $\RR^n$ such that a certain budget constraint holds almost surely in $\cD$. 
The decision problem is weakly $\NP$-hard even in the case of finitely-supported discrete distributions (by a simple reduction from \textsc{Subset-Sum}). It is an alluring problem to obtain a deeper understanding of the cases in which a budget-constrained coupling exists and can be constructed efficiently.

In Section~\ref{sec:multiplayerBlotto}, we gave an algorithm (Algorithm~\ref{alg:nequalskcolonelBlottonashequilibrium}) for efficiently sampling equilibrium strategies in the Blotto game for arbitrarily large numbers of players, as long as battlefields satisfied a value-partitioning constraint.
A special case captured by Corollary~\ref{cor:equalbudgets} is when all battlefields have equal value and the number of battlefields is a multiple of the number of players.
An important case left open therefore is solving for equilibria when the number of battlefields is arbitrary and there are four or more players.
A construction handling this case would complete the picture for symmetric homogeneous multiplayer Colonel Blotto. 

In Section~\ref{sec:BooleanBlotto}, we solved the multiplayer Boolean Blotto problem, where each player could play either a 0 or a 1 at each battlefield.
Of course, the generalization of this problem which allows players to make integer (not just Boolean) bids---discrete multiplayer Blotto---is a natural open problem.

\label{sec:remarks}

\bibliographystyle{plain}
\bibliography{bibliography}

\appendix
\section{Informal derivation of General Lotto solution}
Let us informally describe how we arrive at the General Lotto equilibria in Lemma \ref{lem:generallottocontinuousvalued}, assuming for simplicity that we are in the homogeneous setting considered by Myerson \cite{myerson}, so the battlefields have value 1. We are looking for an equilibrium that exploits the symmetry of the game across players and across battlefields. It is natural to guess that this can be achieved by all $k$ players playing the same single-variable distribution of bids on each of the $n$ battlefields. Denote the cumulative distribution function (CDF) of this distribution by $F$.

In order to derive $F$, we guess that $F$ has no atoms and is supported in a finite interval $[0,\theta]$. Then we consider what happens once players $1,\ldots,k-1$ have fixed their General Lotto strategies to playing $F$ on all $n$ battlefields. Suppose that player $k$ deviates and plays distributions $G_1,\ldots,G_n$ on the $n$ battlefields. Since $F$ has no atoms, a tie between the players is a measure-zero event, so the utility derived by player $k$ on battlefield $j$ is
$\PP[A_{k,j} > \max_{i \in [k-1]} A_{i,j}]$,  where $A_{1,j},\ldots,A_{k-1,j} \sim F$ and $A_{k,j} \sim G_j$ are independent.
Hence player $k$'s payoff on battlefield $j$ depends only on their bid relative to the maximum bid value $M_j = \max_{i \in [k-1]} A_{i,j}$ of all the other players.

Now, if $M_j$ is not uniform over $[0,\theta]$ for some $\theta$, then player $k$ can strictly gain over the other players by playing a slight perturbation $\tilde{F}$ of the distribution $F$, where $\epsilon$ probability mass is moved from values of $x$ where $\PP[M < x]/x$ is lower to values of $x$ where $\PP[M < x]/x$ is higher.
Therefore, if the players are in equilibrium, $\PP[M_j < x] = (F(x))^{k-1} = \min(1,\frac{x}{\theta})$, which implies that for all $i \in [k-1]$ we have
$$F(x) = \min\left(1,(x/\theta)^{\frac{1}{k-1}}\right).$$
One can solve for the scaling parameter $\theta$ by requiring that the budget constraint be tightly enforced: $\sum_{j=1}^n \E[A_{i,j}] = 1$ for any $i \in [k-1]$. We note that $F$ is a scaling of the $\mathrm{Beta}(1/(k-1),1)$ distribution.

\section{\texttt{RotatePair} correctness}
\begin{restatable}{claim}{rotatepairclaim}\label{claim:rotatepaircorrectness} \texttt{RotatePair} is correct and runs in $O(m)$ time.
\end{restatable}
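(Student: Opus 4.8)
The plan is to recognize \texttt{RotatePair} as applying a planar rotation to the orthogonal pair $(u_1,u_2)$ and then to verify each of the four output conditions by direct computation, with a short case analysis covering the two non-degenerate branches. Collecting the outputs as columns $W = (\begin{array}{cc} w_1 & w_2 \end{array})$, the returned vectors satisfy $W = UR$ with $R = \left(\begin{smallmatrix} a & b \\ -b & a \end{smallmatrix}\right)$. In every branch $b = \sqrt{1-a^2}$ (and $(a,b)=(1,0)$ in the degenerate branch), so $a^2 + b^2 = 1$ and $R$ is orthogonal with $RR^T = I_2$. This instantly yields condition (ii), since $WW^T = UR R^T U^T = UU^T$, and condition (i), since each $w_i$ is a linear combination of $u_1,u_2$ and hence supported on $\mathrm{supp}(u_1) \cup \mathrm{supp}(u_2)$.

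For conditions (iii) and (iv) I would first record the squared norms. Writing $N_i = \|u_i\|^2$ and using the hypothesis $u_1 \cdot u_2 = 0$ to kill the cross terms,
\begin{align*}
\|w_1\|^2 &= a^2 N_1 + b^2 N_2 = N_2 + a^2(N_1 - N_2), \\
\|w_2\|^2 &= b^2 N_1 + a^2 N_2 = N_1 - a^2(N_1 - N_2),
\end{align*}
so that $\|w_1\|^2 + \|w_2\|^2 = N_1 + N_2$ is conserved. In the degenerate branch $N_1 = N_2$, the input chain $N_1 \geq t_1 \geq t_2 \geq N_2 = N_1$ forces $t_1 = t_2 = N_1$; since $(a,b)=(1,0)$ gives $w_i = u_i$, both norms already equal their targets, settling (iii) and (iv) and avoiding the zero denominator. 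Otherwise $N_1 > N_2$, so the denominator $N_1 - N_2$ is strictly positive.

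The two remaining branches are symmetric. In the branch $N_1 - t_1 \geq t_2 - N_2$, substituting $a^2 = (N_1 - t_2)/(N_1 - N_2)$ into the formula for $\|w_2\|^2$ gives $\|w_2\|^2 = t_2$ exactly, establishing (iv) with $k = 2$; here $a^2 \in [0,1]$ precisely because $N_1 \geq t_2 \geq N_2$, so $b$ is real. Then $\|w_1\|^2 = N_1 + N_2 - t_2$, and the required inequality $\|w_1\|^2 \geq t_1$ rearranges to exactly the branch condition $N_1 - t_1 \geq t_2 - N_2$, giving the outer part of (iii) while the inner part $t_2 \geq \|w_2\|^2$ holds with equality. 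The complementary branch $N_1 - t_1 < t_2 - N_2$ is handled identically with roles swapped: $a^2 = (t_1 - N_2)/(N_1 - N_2)$ makes $\|w_1\|^2 = t_1$ (condition (iv) with $k = 1$), lies in $[0,1]$ because $N_1 \geq t_1 \geq N_2$, and the inequality $\|w_2\|^2 = N_1 + N_2 - t_1 \leq t_2$ is again a rearrangement of the branch condition. For the running time, computing $N_1,N_2$ takes $O(m)$, the scalars $a,b$ take $O(1)$, and forming $w_1 = a u_1 - b u_2$ and $w_2 = b u_1 + a u_2$ takes $O(m)$, for $O(m)$ overall.

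The computations are routine; the only genuinely load-bearing observation is that the branch-selecting inequality is exactly the inequality needed for condition (iii) on whichever norm is \emph{not} pinned to its target. I therefore expect the main thing to get right is the bookkeeping that ties each case condition to the correct output inequality, together with the verification that $a^2 \in [0,1]$ in each branch so that the square root defining $b$ is real, rather than any conceptual difficulty.
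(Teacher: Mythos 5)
Your proof is correct and follows essentially the same route as the paper's: viewing the update as an orthogonal planar rotation $W = UR$ to get (i) and (ii), computing the squared norms via $u_1 \cdot u_2 = 0$, and observing that the branch condition is exactly the inequality needed for (iii) on the non-pinned norm while the other norm hits its target exactly for (iv). The only addition is your explicit check that $a^2 \in [0,1]$ (so $b$ is real), which the paper leaves implicit; everything else matches.
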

\begin{proof}
If $\|u_1\|^2 = \|u_2\|^2$, then we must also have $\|u_1\|^2 = t_1 = t_2 = \|u_2\|^2$, so returning $(w_1, w_2) \leftarrow (u_1, u_2)$ is correct. Otherwise, items (i)-(iv) still hold:

(i) $\mathrm{supp}(w_1) \cup \mathrm{supp}(w_2) \subseteq \mathrm{supp}(u_1) \cup \mathrm{supp}(u_2)$ since $w_1,w_2$ are a linear combination of $u_1,u_2$.

(ii) $W = (\begin{array}{cc}w_1 & w_2\end{array}) \in \RR^{m \times 2}$ and $U = (\begin{array}{cc}u_1 & u_2\end{array}) \in \RR^{m \times 2}$ are related by $W = U \left(\begin{array}{cc} a & b \\ -b & a\end{array}\right)$, so $$WW^T = U \left(\begin{array}{cc} a & b \\ -b & a\end{array}\right) \left(\begin{array}{cc} a & -b \\ b & a\end{array}\right) U^T = U \left(\begin{array}{cc} a^2 + b^2 & 0 \\ 0 & a^2 + b^2\end{array}\right) U^T = UU^T,$$ since $a^2 + b^2 = 1$.

(iii and iv) There are two cases to consider. Note that since $u_1 \cdot u_2 = 0$, we have $\|w_1\|^2 = a^2 \|u_1\|^2 + b^2 \|u_2\|^2$ and $\|w_2\|^2 = b^2 \|u_1\|^2 + a^2 \|u_2\|^2$: \begin{itemize}
 \item If $\|u_1\|^2 - t_1 \geq t_2 - \|u_2\|^2$, then $$\|w_1\|^2 = \frac{(\|u_1\|^2 - t_2)\|u_1\|^2}{\|u_1\|^2 - \|u_2\|^2} + \frac{(t_2 - \|u_2\|^2)\|u_2\|^2}{\|u_1\|^2 - \|u_2\|^2}
 = \|u_1\|^2 + \|u_2\|^2 - t_2 \geq t_1$$ $$\|w_2\|^2 = \frac{(t_2 - \|u_2\|^2)\|u_1\|^2}{\|u_1\|^2 - \|u_2\|^2} + \frac{(\|u_1\|^2 - t_2)\|u_2\|^2}{\|u_1\|^2 - \|u_2\|^2} = t_2.$$
 \item If $\|u_1\|^2 - t_1 < t_2 - \|u_2\|^2$, then $$\|w_1\|^2 = \frac{(t_1 - \|u_2\|^2)\|u_1\|^2}{\|u_1\|^2 - \|u_2\|^2} + \frac{(\|u_1\|^2 - t_1)\|u_2\|^2}{\|u_1\|^2 - \|u_2\|^2}
 = t_1$$ $$\|w_2\|^2 = \frac{(\|u_1\|^2 - t_1)\|u_1\|^2}{\|u_1\|^2 - \|u_2\|^2} + \frac{(t_1 - \|u_2\|^2)\|u_2\|^2}{\|u_1\|^2 - \|u_2\|^2} = \|u_1\|^2 + \|u_2\|^2 - t_1 \geq t_2.$$
\end{itemize}
And in both cases conditions (iii) and (iv) hold.

The running time is $O(m)$, because we just compute the norms of two vectors of size $m$ and output a linear combination of the vectors.
\end{proof}

\section{Boolean Blotto Lemma Proofs}\label{sec:boolproofs}

For ease of presentation, we define:
$$\mu(p) = \begin{cases}
k-1, & p = 0 \\
\frac{1-(1-p)^{k-1}}{p}, & 0 < p \leq 1 \end{cases},$$
Thus, $m_v(p) = \frac{v}{k} \mu(p)$. The domain of $\mu^{-1}$ is extended by letting $\mu^{-1}(x) = 1$ for $x < 1$ and $\mu^{-1}(x) = 0$ for $x > k-1$. We also define the monotonically non-increasing function $$B(x) = \sum_{j=1}^n \mu^{-1}(k x/v_j),$$ and note that $x^*$ in Lemma~\ref{lem:generallottoboolvalued} is given by $\inf \{x \in \RR : B(x) \leq \cB\}$.

\begin{proof}[Proof of Claim~\ref{claim:continuous}]
By l'H\^{o}pital's rule $$\lim_{p \to 0^{+}} \mu(p) = \lim_{p \to 0^{+}} \frac{(k-1) (1-p)^{k-2}}{1} = k-1 = \mu(0),$$ proving continuity. And for any $p \in (0,1)$, $$\frac{\partial \mu}{\partial p} = \frac{p(k-1)(1-p)^{k-2} - 1 - (1-p)^{k-1}}{p^2} = \frac{(1-p)^{k-2}(1+p(k-2)) - 1}{p^2} < 0,$$ because for $t = k-2 > 0$ we have $(1-p)^{-t} \geq (1+pt)$. This holds because $(1-p)^{-t} |_{p=0} = 1 = (1+pt) |_{p = 0}$ and $\frac{\partial }{\partial p} (1-p)^{-t} = t(1-p)^{-t-1} \geq t = \frac{\partial }{\partial p} (1+pt)$ for $p \in [0,1]$.

The bijectivity follows from continuity and monotonicity.
\end{proof}

\begin{proof}[Proof of Remark~\ref{rem:increaseingkstrategychangebool}]

Part (i) follows because $B(1/k) = \sum_{j=1}^2 \mu^{-1}(1/v_j) \geq \mu^{-1}(1) = 1$, so $x^* \geq 1/k$, so $p_2 = \mu^{-1}(x^*/v_2) \leq \mu^{-1}(k-1) = 0$. To prove part (ii) consider $x = (k-1)v_2/k$, which satisfies $x > 1/k$ by the condition on the number of players. It follows that $B(x) = \mu^{-1}(kx) + \mu^{-1}(kx/v_2) = \mu^{-1}(kx) < 1$. By continuity of $B(x)$, there is $\epsilon > 0$ such that $B(x-\epsilon) \leq 1$, and therefore $x^* < x$. Hence $p_2 = \mu^{-1}(kx^*/v_2) > \mu^{-1}(k(k-1)v_2/(v_2 k)) = \mu^{-1}(k-1) = 0$.

\end{proof}

\begin{proof}[Proof of Lemma~\ref{lem:suffcondsboolvalued}]
The budget constraints are met by (a). If any player deviates from the strategy, then, by the analysis of Lemma~\ref{lem:generallottoboolvalued}, the player's expected payoff cannot improve. This is because by linearity of expectation the expected payoff for Colonel Blotto only depends on the marginal distributions of the bids for the battlefields.
\end{proof}

\subsection{Approximation procedure for Algorithm~\ref{alg:BooleanvaluedcolonelBlotto}}

\begin{enumerate}[wide, labelindent=0pt, itemsep=1em]
\item First, given any $x \in \RR$ we show how to compute an additive $\epsilon'$ approximation $\tilde{p}$ to $p = \mu^{-1}(x)$ in $\poly(\log k, \log(1/\epsilon'))$ operations. If $x \leq 1$ or $x \geq k-1$, then $\tilde{p} = 1$ or $\tilde{p} = k-1$ are respectively correct. Otherwise, for the case $1 < x < k-1$, recall from Claim~\ref{claim:continuous} that $\mu$ maps $[0,1]$ bijectively to $[1, k-1]$, and is continuous and monotonically decreasing. Therefore we can binary search to find $\tilde{p}$ such that $|\tilde{p} - p| < \epsilon'$. This binary search requires only $O(\log(1/\epsilon'))$ evaluations of $\mu$, and each evaluation of $\mu$ up to $\tilde{\epsilon}'$ precision costs only $\poly(\log k, \log(1/\tilde{\epsilon}'))$ operations. We can set the precision parameter to $\tilde{\epsilon}' = \epsilon'/2$, because for any $p',p'' \in [0,1]$, we have $|\mu(p') - \mu(p'')| \geq |p'-p''|$, since $\frac{d}{dp} \mu(p') \leq -1$ for all $p' \in [0,1]$. Therefore the total cost of the binary search is $\poly(\log k,\log(1/\epsilon'))$.

\item Second, we show how to compute $\tilde{x}$ such that $|\tilde{x} - x^*| < \epsilon''$, in $\poly(n,\log k,\log(V/\epsilon''))$ operations. Recall the definition $x^* = \inf \{x \in \RR : B(x) \leq \cB\}$, where $B(x) = \sum_{j=1}^n \mu^{-1}(kx/v_j)$. We will use the fact that $B(x)$ is monotonically non-increasing and continuous. By the proof of Lemma~\ref{lem:suffcondsboolvalued}, in the nontrivial case $\cB < n$ it holds that $x^* \in [0,(k-1)V]$. Hence we can binary search to find $\tilde{x}$ such that $|\tilde{x} - x^*| < \epsilon''$. The binary search requires $O(\log(kV/\epsilon''))$ evaluations of $B(x)$. Using part 1, each evaluation of $B$ up to precision $\tilde{\epsilon}''$ costs $\poly(n,\log k,\log(n/\tilde{\epsilon}''))$ operations, by separately evaluating each term up to precision $\tilde{\epsilon}''/n$. We now investigate the necessary precision $\tilde{\epsilon}''$. At any point in the binary search when we query point $\hat{x}$ one of two cases arises:
\begin{itemize}
\item Case A: For each $x'$ between $\hat{x}$ and $x^*$, there is a $j(x') \in [n]$ such that $\mu^{-1}(kx'/v_{j(x')}) \in (0,1)$. In this case, for all $x'$ between $\hat{x}$ and $x^*$, 
\begin{align*}
    \frac{dB(x)}{dx}|_{x = x'}    = \frac{d}{dx} \sum_{l=1}^n \mu^{-1}(kx/v_l)|_{x = x'} \leq \frac{d}{dx} \mu^{-1}(kx/v_{j(x')})|_{x = x'} \leq -\frac{2k}{(k^{2}-3k+2)v_{j(x')}} \leq -\frac{1}{Vk}
\end{align*}
So $|B(\hat{x}) - \cB| = |B(\hat{x}) - B(x^*)| \geq |\hat{x}-x^*|/(Vk^2),$ and so if $|\hat{x} - x^*| > \epsilon''/2$ it suffices to compute $B(\hat{x})$ up to accuracy $\tilde{\epsilon}'' = \epsilon''/(2Vk^2)$ in order to determine whether $\hat{x} \leq x^*$ or $\hat{x} > x^*$.
\item Case B: Otherwise there is $x'$ between $\hat{x}$ and $x^*$ such that $\mu^{-1}(kx'/v_j) \in \{0,1\}$ for all $j$. In this case, if $\hat{x} \leq x^*$ then our approximation $\tilde{B}(\hat{x})$ to $B(\hat{x})$ satisfies $\tilde{B}(\hat{x}) \geq |\{j : \mu^{-1}(k\hat{x}/v_j) = 1\}| \geq |\{j : \mu^{-1}(kx'/v_j) = 1\}| = B(x') \geq B(x^*)$. And by a similar argument $\tilde{B}(\hat{x}) > B(x^*)$  if $B(\hat{x}) > B(x^*)$; and $\tilde{B}(\hat{x}) \leq B(x^*)$ if $B(\hat{x}) \leq B(x^*)$; and $\tilde{B}(\hat{x}) < B(x^*)$ if $B(\hat{x}) < B(x^*)$.
\end{itemize}
Therefore we can set the precision parameter to $\tilde{\epsilon}'' = \epsilon''/(2Vk^2)$. So the total cost of the binary search is $\poly(n,\log k,\log(V/\epsilon''))$.

\item Third, suppose we have $\tilde{x}$ such that $|\tilde{x} - x^*| < \epsilon''$. Then for each $j \in [n]$ we define $\tilde{p}_j = \mu^{-1}(k\tilde{x}/v_j)$. By a simple calculation, $\mu^{-1}(x)$ is 1-Lipschitz over $\RR$, so we are guaranteed that $|\tilde{p}_j - p_j| \leq k\epsilon''/v_j \leq k\epsilon''/v_n$. Letting $\epsilon'' = (\epsilon v_n/ Vk^2n^2)/2$ and computing $\tilde{x}$ with the procedure from step 1, and approximating $\tilde{p}_j$ up to $\epsilon' = (\epsilon / V k^2n^2)/2$ error with the procedure from step 2, we obtain an overall $(\epsilon/Vkn^2)$ approximation to $p_j$. The total running time is $\poly(n, \log k,\log(V/\epsilon),\log(V/v_n))$, which is polynomial in the input size of the problem.

\item Finally, given approximations $\tilde{p}_j$ to the true $p_j$ probabilities, the sampling procedure takes time and space linear in $n$ and the number of bits of precision in the $\tilde{p}_j$ probabilities. This is polynomial in the problem size and $\log(1/\epsilon)$.
\end{enumerate}

\end{document}